\newtheorem{theorem}{Theorem}
\newtheorem{lemma}[theorem]{Lemma}
\newcommand{\mb}[1]{\ensuremath{\boldsymbol{#1}}}
\newcommand{\onee}{\mathbbm{1}}
\newcommand{\eo}{\mathbb{E}_{\boldsymbol {\omega}}}
\newcommand{\bsigma}{\boldsymbol {\sigma}}
\def\opt{\textsc{OPT}}
\def\alg{\textsc{ALG}}
\def\galg{\textsc{G-ALG}}
\def\astalg{\textsc{Assort-ALG}}
\def\astgalg{\textsc{Assort-G-ALG}}
\def\dpg{\textsc{RBA}}
\DeclareMathOperator*{\argmax}{arg\,max}
\begin{document}
	% Title portion. Note the short title for running heads
	%	\title { Fluid approximation guided algorithms for Reusability and Choice}
%	\title {Asymptotically Optimal Competitive Ratio for Online Allocation of Reusable Resources}
%	\title {Online Bipartite Allocation with Reusability and Choice: Algorithms Guided by Fluid Approximations}
	\title {Online Allocation of Reusable Resources via Algorithms Guided by Fluid Approximations}
%	\title {Online Bipartite Allocation with Post-allocation Stochasticity: Combining Path-based Analysis with Fluid Approximations}
	\author {  Vineet Goyal \\ vgoyal@ieor.columbia.edu \and Garud Iyengar\\ garud@ieor.columbia.edu \and
Rajan Udwani \\ rudwani@berkeley.edu }
	\date{}
	\maketitle
\begin{abstract}
 We consider the problem of online allocation (matching and assortments) of reusable resources where customers arrive sequentially in an adversarial fashion and allocated resources are used or rented for a stochastic duration that is drawn independently from known distributions. Focusing on the case of large inventory, we give an algorithm that is $(1-1/e)$ competitive for general usage distributions. At the heart of our result is the notion of a relaxed online algorithm that is only subjected to fluid approximations of the stochastic elements in the problem. The output of this algorithm serves as a guide for the final algorithm. This leads to a principled approach for seamlessly addressing stochastic elements (such as reusability, customer choice, and combinations thereof) in online resource allocation problems, that may be useful more broadly.
\end{abstract}
\section{Introduction}
%In this note we consider the problem of online matching and more generally, online assortments, when the resources are reusable. 
The problem of online matching introduced by Karp et al. \cite{kvv}, and its generalizations have played an important role in shaping the theory and practice of online algorithms. Traditionally, most of the work on online matching and more generally, online resource allocation, has focused on the case of resources that can only be used once \cite{kvv,pruhs,msvv,devhay,devsiv,survey,devanur,negin}. Recently, there has been a surge of interest in settings where each unit of a resource can be used multiple times \cite{dickerson, RST18, reuse, baek, feng,rba}. These settings present a generalization of classic online resource allocation and thus, retain all the trade offs incumbent in classical settings while adding new complexity due to the aspect of reusability. 

%A major line of investigation has been 
We are interested in finding online algorithms that do not have any advance knowledge (distributional or otherwise) of arrival sequence but compete well against the optimal offline/clairvoyant algorithm that has complete distributional knowledge and knows the entire arrival sequence in advance, but observes the stochastic realizations of usage durations in real-time when used resources return. In this direction, \cite{rba} recently proposed a policy called Ranking Based Allocation or \dpg\ for short, and showed that this achieves the best possible competitive ratio guarantee of $(1-1/e)$, for large starting inventory and when the usage distributions satisfy the IFR property (roughly speaking). For general distributions, the best known result is $1/2$ competitiveness of the greedy algorithm due to \cite{reuse}\footnote{There is a wealth of related past work on these kinds of problems. An appropriate discussion of this is beyond the scope of this note and we defer to \cite{rba} for such a discussion.}.

Focusing on the large inventory regime, we %build on the ideas in \cite{rba} by adding the additional ingredient of using a fluid approximation as guide. This allows us to achieve a 
propose a new algorithm that is $(1-1/e)$ competitive for arbitrary usage distributions. This is the best possible competitive ratio guarantee. A major ingredient in achieving this result is the use of a relaxed online algorithm as a guide for the final algorithm. The relaxed algorithm is not subject to the stochasticity due to reusability (and choice) directly but rather only to their fluid approximations. This allows us to fully unlock the distributional knowledge of usage times. Before discussing this in more detail we review the model more formally and introduce some notation next. This is followed by a formal discussion of the results. We start with the setting of online $b-$matching with reusability and then give a general transformation of the result from matching to assortments that may be of independent interest.
\subsection{Model, Notation, and Prior Algorithms}
Let $G=(I,T,E)$ denote a bipartite graph with vertex sets $I,T$. The offline vertices or \emph{resources} will typically be addressed using index $i\in I$ and online vertices or \emph{arrivals} denoted using $t\in T$. We allow multiple copies or \emph{units} of each resource and let $c_i$ denote the number of units of resource $i$. This is also referred to as the starting inventory of resource $i$. Let $c_{\min}$ denote the minimum starting inventory over all resources. We are interested in the high inventory regime where $c_{\min}\to \infty$.

 We let the online vertices arrive in order of their index i.e., vertex $t+1\in T$ arrives after vertex $t\in T$. Overloading notation, we let $T$ also denote the total number of arrivals. The time of arrival of vertices $t\in T$ is given by $a(t)$ and we denote the set of times as $A(T)=\{a(1),\cdots,a(T)\}$. The arrival process is continuous so the arrival times can take any value and w.l.o.g., we allow at most one arrival to occur at any moment in time. Up on arrival of a vertex $t$, we observe the edges incident on $t$, denoted by set $S_t$. We must make an immediate and irrevocable decision to match $t$ to an unmatched unit of some resource in $S_t$ or reject $t$. If we match $t$ to a unit of resource $i$, the unit is used for a random duration $d_{it}$ that is independently drawn from usage distribution $F_i$. The unit returns to the system at time $a(t)+d_{it}$ and is available for a re-match immediately on return. The usage distribution is known to the algorithm in advance but the exact duration of usage is revealed only when a unit returns for reuse. %{\color{red}We assume that $F_i(0)=0$, which is essentially without loss of generality (w.l.o.g.) as for practical purposes it is acceptable to allow at most one arrival at any given moment. Letting $\epsilon>0$ represent the minimum time separation between any two arrivals, we have w.l.o.g.\ that $F_i(\epsilon)=0$. Alternatively, letting $F_i(0)=0$ can create }
  We earn a reward $r_i(d_{it})$ corresponding to this use of a unit of $i$, where $r_i(\cdot)$ can be any non-decreasing non-negative reward function with finite expectation $r_i=E_{d_{it}\sim F_i}[r_i(d_{it})]$. As shown in \cite{rba}, it suffices to consider only the expected rewards $r_i$ in any algorithm. The overall objective is to maximize the total reward with no knowledge of the sequence of arrivals. We evaluate performance of online algorithms for this problem by comparing against optimal offline algorithm that knows the entire graph in advance. More specifically, the offline algorithm knows everything known to an online algorithm and matches vertices in $T$ in the same order as an online algorithm, making irrevocable decisions. However, it also knows in advance the entire set of vertices in $T$, the set of edges $E$, and the arrival times $a(t)$ for all $t\in T$. Note that the realized usage durations are revealed to offline only when used units return, same as an online algorithm. This benchmark is also commonly referred to as the clairvoyant benchmark. As shown in \cite{rba}, in the asymptotic regime the performance of clairvoyant coincides with a natural LP benchmark introduced by Dickerson et al. \cite{dickerson}.  Our goal is to find an online algorithm that earns a fraction $\alpha$ of the expected total reward earned by offline on every possible arrival sequence $T,A(T)$.

We use $\galg,\alg$, and $\opt$ to refer to the online algorithms we analyze and the optimal offline respectively. For convenience, we also use these labels to denote the expected total reward of the respective algorithms. We let $\opt_i$ denote the total expected reward from usage of resource $i$ alone. To address the randomness in usage durations, we use $\omega$ to represent a sample path over random durations in \opt\ as follows: For each resource $i$ and for every unit $k$ of $i$, construct a long list (say $T$ entries) of independent samples from distribution $F_i$. When any unit of a resource is matched, we draw the first unused sample from the corresponding list of samples. This collection of lists and samples uniquely determines a sample path $\omega$. The matching output by \opt\ is a function of the sample path $\omega$ and we denote the set of arrivals matched to resource $i$ on sample path $\omega$ using $O(\omega,i)$. 

\subsubsection*{Prior Algorithms}
We review three prior algorithms for this problem before describing our algorithm. Let us start with the greedy algorithm, which is arguably the simplest one for this class of problems. Given a set of available resources $S_t$ with an edge to a new arrival $t$, the greedy algorithm matches $t$ to,
\[ \argmax_{i\in S_t}r_i, \]
i.e., an available resource with the maximum expected reward. As shown in \cite{reuse}, this is $1/2$ competitive for the problem in general and further, there are instances where the algorithm actually does no better than $1/2$. In fact, this implies that the worst case performance of greedy is identical for both reusable and non-reusable resources. This establishes a baseline result and typically one would like to do better than na\"ive greedy. 

For the case of non-reusable resources, a scalable and intuitive algorithm that achieves this goal is \emph{Inventory Balancing} (IB) \cite{negin,msvv}, sometimes also known as Balance \cite{pruhs}. The main insight in IB is to also pay attention to the fraction of inventory remaining and protect resources that are running low on available units. IB achieves this by considering an inventory dependent reduced price for each resource $i\in I$, given by,
\[r_i \left(1-g(y_i(t))\right),\]
where $y_i(t)\in[0,1]$ is the fraction of inventory of $i$ available when $t$ arrives and $g(x)=e^{-x}$ is a trade-off function that leads to the optimal guarantee of $(1-1/e)$ for $c_{\min}\to \infty$, even in the presence of stochastic elements such as customer choice \cite{negin}. Feng et al. \cite{feng} showed that IB also achieves this same guarantee for the case of deterministic reusability i.e., when every allocated unit of a resource is used for the same deterministic/fixed time duration.

In general, for reusable resources it is not obvious if IB effectively captures the fact that resources are used multiple times. In particular, \cite{rba} showed that even for the simple case of two-point usage distributions, IB may not appropriately capture ``expected" remaining inventory due to units that will return in the future. Specifically, while IB may protect resources that are low on remaining inventory, due to reusability these resources might be replenished in the near future and may not require protection. To address this, \cite{rba} proposed the \dpg\ algorithm which tracks availability of individual units of a resource. Formally, in \dpg\ one rank orders units of a resource $i$ in some order that is fixed throughout. Let $k\in[c_i]$ index the units. When a unit of $i$ needs to be allocated, the available unit which has the highest index will be allocated. This unit is then unavailable until returned. Given this, \dpg\ computes the reduced price as,
\[r_i \left(1-g\left(\frac{z_i(t)}{c_i}\right)\right),\]
where $z_i(t)$ is the highest available unit of $i$ when $t$ arrives. Observe that for non-reusable resources this collapses to IB. Interestingly, for reusable resources this appears to capture the effect of reusability despite being completely oblivious to the usage distributions. While \dpg\ achieves the guarantee of $(1-1/e)$ for a class of distributions that roughly speaking includes IFR distributions, its performance for general distributions is unknown. The analysis is substantially more intricate and does not easily generalize to the case when we also have stochasticity due customer choice. % We defer to \cite{rba} for further intuition and discussion.
\subsection{Our Contributions}
An important challenge with analyzing algorithms such as \dpg\ in presence of stochastic elements is due to the \emph{adaptivity} of the algorithm. In particular, the highest available unit $z_i(t)$ of resource $i$ at any arrival $t$, is in general, a function of the durations of \emph{all past allocations} and thus, exhibits interesting and non-trivial behaviour. %While some relief is offered by the fact that whether a unit is matched does not depend on the usage durations of units that are ranked lower, this proerty disappears in the presence of stochasticity due to customer choice. Overall, this is somewhat undesirable from a theoretical standpoint, one would like to find a general way to address combinations of various forms of reusability that can be solved as a stand alone. 
This problem is further compounded when we have stochasticity due to both reusability and customer choice (see Example H.1 in Appendix H of \cite{rba}). 

A possible solution to this would be to consider \emph{non-adaptive} algorithms. This refers to a class of algorithms for sequential decision making that do not depend on the past history of the random realizations and therefore, exhibit a uniformity in actions over all sample paths. Such algorithms are often easier to analyze as they are not directly influenced by the vagaries of stochastic parts of the problem. Unfortunately, such algorithms can suffer from poorer performance in comparison to adaptive ones, especially for the case of adversarial arrivals (see for instance, Mehta et al.\ \cite{mehta}). Interestingly, we find that in the large inventory regime this gap disappears and we devise non-adaptive algorithms with the best guarantee achievable by any online algorithm. Further, the non-adaptivity allows us to cleanly and seamlessly deal with the combination of stochasticity due to reusability and choice. 
\begin{theorem}\label{main}
\alg\ achives a guarantee of $(1-1/e-\delta)$ with $\delta=O\left(\sqrt{\frac{\log c_{\min}}{c_{\min}}} \right)$ for online $b-$matching with reusable resources. In the presence of customer choice, the appropriately modified \astalg\ achieves the same performance guarantee. For $c{\min}\to \infty$, this converges to the best achievable guarantee of $(1-1/e)$.
\end{theorem}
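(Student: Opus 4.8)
The plan is to route everything through a deterministic \emph{fluid} version of the problem and a relaxed online algorithm \galg\ (the ``guide'') operating on it. In the fluid world, when \galg\ matches an arrival $t$ to a unit of resource $i$ we do not draw a random duration; instead we imagine that a fractional amount $\bar F_i(s-a(t))$ of that unit is in use at every later time $s$, where $\bar F_i=1-F_i$ is the survival function. The fluid inventory of $i$ available at time $s$ is then $c_i$ minus the sum of these deterministic fractional usages over all arrivals previously routed to $i$; in the assortment setting we similarly replace the random customer choice by its expectation, so that each offer set $S$ deterministically consumes a fraction $\phi_i(S)$ of the units it could be matched to. All stochasticity is thereby removed from \galg's environment.

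First I would define \galg\ to be an inventory-balancing-type policy on this fluid instance, using the trade-off function $g(x)=e^{-x}$ and reduced prices $r_i(1-g(\cdot))$ exactly as in IB and \dpg, but run with slightly shrunk capacities $(1-\delta)c_i$ so that the fluid inventory of every resource stays at least $\delta c_i$ at all times. Because the fluid instance is deterministic, the competitive analysis of \galg\ is the classical online primal-dual / inventory-balancing argument: dual variables read off from the reduced prices together with the standard $g(x)=e^{-x}$ potential yield $\galg\ge(1-1/e)\cdot\mathrm{LP}_{(1-\delta)}$, and the capacity shrink costs only a $(1-\delta)$ factor, so $\galg\ge(1-1/e-O(\delta))\cdot\mathrm{LP}$ where $\mathrm{LP}$ is the fluid LP on the original capacities. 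I would then invoke the fact (from \cite{dickerson,rba}) that $\mathrm{LP}\ge\opt$ in the asymptotic regime, i.e.\ the fluid LP upper bounds the clairvoyant benchmark, which closes the benchmark side.

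Next I would define \alg\ (resp.\ \astalg) to be \emph{non-adaptive}: it pre-commits to the same randomized routing rule that \galg\ uses on the fluid instance --- on arrival $t$, sample a resource $i$ (resp.\ an offer set) from \galg's fluid decision --- and then executes this rule on the real sample path whenever the sampled unit is actually free, rejecting otherwise. The heart of the argument is a coupling/concentration step showing that, with high probability over the real randomness, every attempted match succeeds, so \alg's realized reward equals \galg's fluid reward. Since \alg\ is non-adaptive, the set of arrivals routed to $i$ does not depend on realized durations, so for any fixed time $s$ the number of units of $i$ in use at $s$ is a sum of \emph{independent} $\mathrm{Bernoulli}(\bar F_i(s-a(t)))$ indicators over the arrivals $t$ routed to $i$, with mean exactly $c_i$ minus the fluid inventory, hence at most $(1-\delta)c_i$. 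A Chernoff bound then gives that this count stays below $c_i$ except with probability polynomially small in $c_{\min}$, provided $\delta=\Omega(\sqrt{\log c_{\min}/c_{\min}})$; a union bound over the at most $T$ arrival times and over the resources (the inventory process changes only at arrivals and returns) controls all relevant events, and the negligible failure event contributes a negligible loss to the reward. The same template handles customer choice: non-adaptivity makes the number of customers who select item $i$ from the offered sets another sum of independent indicators, concentrating identically --- this is exactly the promised seamless transformation from matching to assortments.

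I expect the main obstacle to be this concentration/coupling step: making precise that the non-adaptive execution really turns the in-use counts (and, for assortments, the choice counts) into sums of independent indicators despite the fact that a rejected attempt slightly perturbs the realized routing, and verifying that the union bound over continuous arrival times and over resources costs only a $\log c_{\min}$ factor so that the safety margin $\delta=O(\sqrt{\log c_{\min}/c_{\min}})$ suffices. The clean way around the rejection-perturbation issue is to condition on the global good event that concentration holds simultaneously for all resources and all relevant times; on this event \alg\ never rejects and its trajectory coincides with \galg's, and off it the reward loss is bounded trivially by the tiny failure probability times the total reward. Everything else --- the primal-dual bound for \galg\ and the inequality $\mathrm{LP}\ge\opt$ --- is either classical or already available from \cite{rba,dickerson}.
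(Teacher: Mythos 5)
Your high-level architecture --- a deterministic fluid guide \galg, a non-adaptive \alg\ that randomly rounds the guide's fractional output with a $1/(1+\delta_i)$ safety margin, and a Chernoff bound exploiting independence of the rounding and duration draws --- is exactly the paper's. The rounding step is essentially Lemma~\ref{algvgalg}, and your worry about rejection perturbation is a non-issue for the reason you suspect: the routing indicators are determined by \galg's output alone, so dropping the availability indicator only decreases the in-use count and the sum stays a sum of independent Bernoullis. The real gap is in the step you dismiss as ``classical.'' For reusable resources the inventory-balancing primal-dual argument does not go through off the shelf, even in the fluid world: the dual-fitting inequality for resource $i$ must control how much reward \opt\ extracts from a unit of $i$ during the (recurring, non-terminal) intervals in which \galg\ has that unit exhausted, and this quantity is governed by a stochastic renewal-type process, not by a one-shot ``inventory ran out'' event. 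The paper's entire technical core is devoted to this: the path-based certificate against \opt\ directly (not against an LP), a per-unit coupling bounding $\eo[\sum_{t\in O(\omega,k_O)}\onee(\neg k,t^+)]$ by $r(F_i,\mb{s}(k))$, the equivalence of the generalized random process with its fluid counterpart (Lemma~\ref{equiv}), and the zero-set and monotonicity lemmas (Lemmas~\ref{zeroset} and~\ref{monotone}) that together yield $r(F_i,\mb{s}(k))\leq \sum_t y_i(k,t)$, i.e., \opt's reuse opportunities during \galg's exhausted periods are dominated by \galg's own reward from that unit. Nothing in your proposal substitutes for this chain; note also that the LP you would compare against has one capacity constraint per (resource, time) pair rather than per resource, so the standard one-dual-variable-per-resource fitting does not even typecheck.

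The assortment half has a second genuine gap. When \astalg\ samples an assortment $A$ from the guide but only a subset $S\subseteq A$ is available, offering $S$ changes the choice probabilities of the surviving items through substitution, so the number of customers selecting item $i$ is no longer a sum of independent indicators with the means inherited from \astgalg, and your concentration argument breaks exactly where you need it. The paper resolves this with the probability-matching construction (Lemma~\ref{probmatch}): an $O(m^2)$ procedure producing a weighted collection of nested sub-assortments of $S$ under which every available item is chosen with exactly its target probability $\frac{1}{1+\delta_i}\phi(A,i)$, restoring the independence structure. Your proposal needs this (or an equivalent device) before ``the same template handles customer choice'' is true.
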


That the guarantee can be achieved by a non-adaptive algorithm is perhaps particularly surprising given the evidence that even adaptive algorithms like IB do not seem to suitably account for reusability. Given the partial success of \dpg\ in this regard, one might expect that an algorithm that is even more sensitive to reusability would be the right candidate for a general result. Nonetheless, a tight result is achievable for the general case via a non-adaptive algorithm that combines the insight from \dpg\ with a fluid approximation of reusability to make it non-adaptive. 

In order to describe the algorithm for $b$-matching with reusability, we start by proposing a deterministic online algorithm, referred to as \galg, that is allowed more power than a standard online algorithm. First, it is not subject to the stochasticity in usage durations and instead, experiences reusability in a \emph{deterministic fluid form}. In this respect, \galg\ has more power than even \opt, which is subject to reusability in the true stochastic sense. Second (and as a consequence), \galg\ is allowed to fractionally match each arrival to many resources. Note that one might consider allowing offline to also make fractional matches. In fact, it can be shown that w.l.o.g., \opt\ is given by a dynamic program that always matches integrally and so allowing fractional matches alone does not give \galg\ any advantage over \opt.

\begin{algorithm}[H]
	\SetAlgoNoLine
	%	\KwIn{I}
	%	\KwOut{The frequency number ($FreNum_{\alpha}$) node $\alpha$ gets assigned.}
	\textbf{Inputs:} Set $I$ of resources, capacities $c_i$, usage distributions $F_i$\;
		\textbf{Outputs:} Values $x_{it}\in[0,1]$\;
	Let $S = I$, $g(t)=e^{-t}$, and % $z_i(k)=1$ for every $k\in[c_i]\cup\{0\}$ and all $i\in I$\;
	initialize $z_i(k)=1$ for every $i\in I,k\in[c_i]$\;
	\For{every new arrival $t$}{
		For every $i\in I,k\in[c_i]$ and $t\geq 2$, update values 
		\[ 
		%	\begin{array}{ll}
		z_i(k)=z_i(k)+\sum_{\tau=1}^{t-1} \Big(F_i\big(a(t)-a(\tau)\big)-F_i\big(a(t-1)-a(\tau)\big)\Big)y_i(k,\tau)
		%	\end{array}
		\text{ \tcp{Fluid\,\, update}}
		\]\\
		%	Update $z_i(k)$ values deterministically\;
		Let $S_t=\{i \mid (i,t)\in E\}$\;
		Initialize $\eta=0$, $y_i(k,t)=0$, and $x_{it}=0$ for all $i\in S_t,k\in [c_i]$\;
		\While{$\eta<1$}{
			\For{$i\in S_t$}{
				$ k^*(i)=\underset{k\in[c_i]}{\arg\max}\, \{k \cdot \onee(z_i(k)>0)\} \text{ \tcp{Highest\,\, available\,\, unit} } $\
			}
			Let $ i^*=\underset{i\in S_t }{\arg\max} \, r_i \Big(1-g\big(\frac{k^*(i)}{c_i}\big)\Big) \text {\tcp{Maximum\,\, reduced\,\, price}}$\
			
			\textbf{if } $k^*(i^*)=0$ \textbf{ then } exit while loop\;
			%	Increase $x_{i^*t}$, $\eta$, and decrease $z_i(k^*_{i^*})$, by $\min\{z_i(k^*_{i^*}),1-\eta\}$\;
			Let $y(k^*(i^*),t)=\min\{z_i(k^*(i^*)),1-\eta\}$\;
			Increase $x_{i^*t}$, $\eta$, and decrease $z_i(k^*(i^*))$, by $y_{i^*}(k^*(i^*),t)$ $ \text{ \tcp{Fractional\,\, match}}$\
			%	Decrease $z_i(k^*_{i^*})$ by $\min\{z_i(k^*_{i^*}),1-\eta\}$\;
	}}	
	\caption{\galg}
	\label{galg}
\end{algorithm}
The variable $z_i(k)$ keeps track of the fraction of unit $k$ of $i$ that is available. Variable $y_i(k,t)$ is the fraction of unit $k$ matched to arrival $t$. Observe that the fluid update equation takes a deterministic view of reusability whereby, if fraction $y_i(k,t)$ of unit $k$ is matched to $t$ then by time $a(t)+d$, exactly $F_i(d)$ fraction of this allocated fraction returns. The values $x_{it}$ represent the total fraction of $t$ matched to resource $i$. Note that every arrival is matched fractionally to the unit with the highest reduced price. In this sense, \galg\ performs its fractional matching by using the rank based allocation rule. Note also that each arrival may be matched to numerous units of different resources. The while loop runs as long as some fraction of an arrival can still be matched. Observe that at every new arrival, the algorithm runs for a finite number of steps as the highest available unit $k^*_i$ decreases by at least 1 for some resource $i$ in each iteration of the while loop. Thus, it takes at most $\sum_i c_i$ number of steps for the algorithm to fractionally match an arrival. %In addition, the algorithm also needs to update the state of each unit $k$ to account for the fraction that returns from previous matches, which takes $O(t)$ time per unit. However, the state update can be done asynchronously in the time between arrivals and can also be fully parallelized as the update for each unit is independent of other units.
We show that the expected performance of \galg\ is always at least $(1-1/e)-O(1/c_i)$ of \opt. This demonstrates that by letting online algorithms have some additional power, one can get an asymptotically optimal competitive ratio result. We refer to this algorithm as \galg\ since it will be used to \emph{guide} the actual online algorithm, which we denote simply as \alg\ and describe later. %In summary, at the heart of \galg\ is the idea of priority or rank based allocation augmented with the new idea of tracking available fractions of units via fluid reusability. Without the rank based allocation \galg\ still experiences the issues faced by IB and may not suitably account for ``effective" inventory. 

Clearly, due to the fact that \galg\ experiences only a fluid form of reusability and updates the availability of resources accordingly, we cannot directly use the output of this algorithm to match arrivals.  Nonetheless, observe that since we know the usage distributions in advance we can implement \galg\ online and use the output of this algorithm in the actual online algorithm, \alg. %The actual online algorithm, \alg, will be subject to stochasticity in usage durations and must make integral matches. 
Specifically, at every arrival we first compute the fractional match made by \galg\ and then use this to make a match in \alg\ by means of independent randomized rounding. 

\begin{algorithm}[H]
	\SetAlgoNoLine
	%	\KwIn{I}
	%	\KwOut{The frequency number ($FreNum_{\alpha}$) node $\alpha$ gets assigned.}
	\textbf{Inputs:} Set $I$ of resources, capacities $c_i$, values $\delta_i$, and input from \galg\;
	Let $S = I$ and % $z_i(k)=1$ for every $k\in[c_i]\cup\{0\}$ and all $i\in I$\;
	initialize inventories $y_i(0)=c_i$\;
	\For{every new arrival $t$}{
		For all $i\in I$, update inventories $y_i(t)$ for returned units\;
		Let $S_t=\{i \mid (i,t)\in E\}$\;
		Get inputs $x_{it}$ from \galg\ and 
		%	Update $z_i(k)$ values deterministically\;
		independently sample $u\in U[0,1]$\;
		\textbf{if } $\Big(u\,(1+\delta_i)\in(\sum_{j\in S_t; j\leq i-1} x_{jt},\sum_{j\in S_t; j\leq i} x_{jt}]$ and $y_i(t)>0\Big)$ \textbf{ then } Match $t$ to $i$\;
		\textbf{else } Reject $t$\; 
	}
	\caption{\alg}
	\label{alg}
\end{algorithm}

Our overall algorithm comprises of running and updating the states of both \galg\ and \alg\ at each arrival. As intended, the component that accounts for reusability, i.e., \galg, only experiences a fluid form of reusability and the overall algorithm is non-adaptive. To analyze the overall algorithm we use the path based certificate developed in \cite{rba,stochrew}. We next review this certification and subsequently prove the online $b-$matching part of Theorem \ref{main}. In Section \ref{sec:asst}, we visit the generalization to assortment and more formally discuss our framework for going from matchings to assortments.

%\begin{itemize}
%	\item General $(1-1/e)$ result with convergence rate $O\left(\sqrt{\frac{\log c_{\min}}{c_{\min}}}\right)$ for matching and assortments.
%	\item Stochasticity in reusability makes things challenging, deterministic easy. Stochasticity in reusability also interacts with that of assortments in non-trivial ways.
%	\item Deal with stochasticity in its fluid form $\dots$ but any adaptive algorithm suffers from the same challenges due to stochastic realizations.
%	\item Incorporate stochasticity in fluid form in a virtual algorithm and use a non-adaptive algorithm to expose to true stochasticity. 
%	\item Virtual algorithm is implementable but disconnected from ``reality" - implements fractional matches and is subject only to a deterministic fluid version of reusability. Solution  is used as a guide in the final algorithm which implements randomized rounding.
%	\item Analysis is still path based.
%	\item Seamlessly integrate stochasticity due to choice also by means of its fluid approximation. Cannot directly randomly round, need to ``probability match".
%	\item General recipe for going from $b-$matchings to assortments.
%	\item Slower than \dpg\ but can be made faster. Deeper connection to \dpg\ through the random process viewpoint.
%	\item Summary: 	The virtual algorithm  can really deal with reusability in complete generality and making the final algorithm non-adaptive really allows us to mold reusability and assortments together easily (without any non-trivial interactions between the two).
%\end{itemize}
\subsection{Review of Path-based Certificate}
Consider non-negative values $\theta_i$ for $i\in I$, path based values $\lambda_t(\omega)$ for $t\in T$, and constraints,
\begin{eqnarray}
	\sum_t\eo[\lambda_t(\omega)] +\sum_i \theta_i \leq \beta \alg,\label{cert1}\\
	\eo\Big[\sum_{t\in O(\omega,i)} \lambda_t (\omega) \Big] +\theta_i \geq \alpha_ir_i \opt_i\quad \forall i\in I.\label{cert2}
	\end{eqnarray}
Recall that $\omega$ denotes a sample path over stochastic usage durations in \opt. Existence of a feasible solution to the above system implies \alg\ is $\frac{\min_i \alpha_i}{\beta}$ competitive (Lemma 4 in \cite{rba}). Therefore, the above linear conditions in $\theta_i$ and $\lambda_t(\omega)$ act as a certificate. Finding a feasible solution for this set of conditions certifies a competitive ratio guarantee for \alg.  More simply, especially for online $b-$matching, we use path independent variables $\lambda_t$.
%\section{Online $b-$matching with Reusability}
%We first focus on the case of online $b-$matching.  %Thus, a na\"ive implementation of the algorithm takes $O\big(t\cdot \sum_i c_i\big)$ time to match and update its overall state at arrival $t$. 
%To speed up the fractional matching part, we can count $z_i(k)>0$ only if value is at least some $\epsilon$. This $\epsilon$ will complicate comparison with $\theta_i$, but may still work out with an error of $O(\epsilon)$ in guarantee.

\section{Analysis for Online $b-$matching with Reusability}
%\subsection{Proof of $(1-1/e)$ Competitiveness}
The following process will be extremely important in the analysis. This is an explicitly defined random process that is independent of the complexities of \alg\ and \opt\ and a key piece of the analysis will involve an understanding of how the expected reward in this process responds to changes in the parameters of the process.

\textbf{Generalized $(F,\bsigma,\mb{p})$ random process:} We are given a single unit of a resource and ordered set of points $\bsigma=\{\sigma_1,\cdots,\sigma_{T}\}$ on the positive real line such that, $0<\sigma_1< \sigma_2< \cdots< \sigma_T$. These points are also referred to as arrivals and each point is associated with a probability given by the ordered set $\mb{p}=\{p_1,\cdots,p_T\}$. The resource is in one of two states at every point in time - \emph{free/available} or \emph{in-use/unavailable}. It starts at time $0$ on the line in the available state. The state of the unit evolves as we move along the positive direction on the line. If the unit is available just prior to a point in $\sigma_t\in \bsigma$, then with probability $p_t\in \mb{p}$ the unit independently becomes in-use for a random duration $\mb{d}$ drawn independently according to distribution $F$. The unit stays in-use from $(\sigma_t,\sigma_t+\mb{d})$ and switches back to being available at time $\sigma_t+\mb{d}$. Each time the unit switches from available to in-use we earn unit reward. Let $r(F,\bsigma,\mb{p})$ denote the total expected reward of the random process. When $\mb{p}$ is the set of all ones, we drop it from notation. Thus, $(F,\bsigma)$ represents the random process where probabilities associated with all arrivals in $\bsigma$ are unity. This is also consistent with the random process defined in previous work \cite{rba}. Another useful shorthand we use is $\mb{1}_{\bsigma}$ to denote a set of unit probabilities corresponding to arrivals in set $\bsigma$. For two sets $\mb{p}_1$ and $\mb{p}_2$ of probabilities over the same set of arrivals, we use $\mb{p}_1\vee \mb{p}_2$ to denote the probability set with maximum of the two probability values for each arrival.

When the probabilities $\mb{p}$ are all set to one, we recover the original random process defined in \cite{rba}. Now, consider a fluid version of this process.

\textbf{Fluid $(F,\bsigma,\mb{p})$ process:} We are given a single unit of a resource and ordered set of points/arrivals $\bsigma=\{\sigma_1,\cdots,\sigma_{T}\}$ on the positive real line such that, $0<\sigma_1< \sigma_2< \cdots< \sigma_T$. Each arrival is associated with a fraction given by the ordered set $\mb{p}=\{p_1,\cdots,p_T\}$. The resource is fractionally consumed at each point in $\bsigma$ according to $\mb{p}$. If a fraction $\delta_t$ of the resource is available when $\sigma_t\in \bsigma$ arrives, then $p_t\delta_t$ fraction of the resource is consumed by $\sigma_t$, generating reward $p_t\delta_t$. The $p_t\delta_t$ fraction consumed at $\sigma_t$ returns deterministically in the future according to the distribution $F$ i.e., exactly $F(d)$ fraction of $p_t\delta_t$ is available again by time $\sigma_t+d$, for every $d\geq 0$. %. is in one of two states at every point in time - \emph{free/available} or \emph{in-use/unavailable}. It starts at time $0$ on the line in the available state. The state of the unit evolves as we move along the positive direction on the line. If the unit is available just prior to a point in $\sigma_t\in \bsigma$, then with probability $p_t\in \mb{p}$ the unit independently becomes in-use for a random duration $\mb{d}$ drawn independently according to distribution $F$. The unit stays in-use from $(\sigma_t,\sigma_t+\mb{d})$ and switches back to being available at time $\sigma+\mb{d}$. Each time the unit switches from available to in-use we earn unit reward.

\begin{lemma}\label{equiv}
	The probability of match to any arrival $t$ in the $(F,\bsigma,\mb{p})$ random process is the same as the fraction of resource available at arrival $t$ in the fluid counterpart. Consequently, the expected reward in every $(F,\bsigma,\mb{p})$ random process is exactly equal to the total reward in the fluid $(F,\bsigma,\mb{p})$ process.
\end{lemma}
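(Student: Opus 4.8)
The plan is to prove the stronger statement that, for every $t$, the probability that the unit is free just before $\sigma_t$ in the $(F,\bsigma,\mb p)$ random process equals the fraction $\delta_t$ of the resource available just before $\sigma_t$ in the fluid process; both conclusions of the lemma are then immediate. Denote the former probability by $q_t$, so that $q_1=\delta_1=1$. I would show that $q_t$ and $\delta_t$ satisfy one and the same recursion in $t$ and conclude by induction.

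For the fluid process the recursion is read off directly from the return rule: a fraction $p_\tau\delta_\tau$ is consumed at $\sigma_\tau$, of which exactly the fraction $F(\sigma_t-\sigma_\tau)$ has returned by time $\sigma_t$, so the mass still out at $\sigma_t$ on account of $\sigma_\tau$ is $p_\tau\delta_\tau\bigl(1-F(\sigma_t-\sigma_\tau)\bigr)$; summing over $\tau<t$ gives $\delta_t=1-\sum_{\tau<t}p_\tau\delta_\tau\bigl(1-F(\sigma_t-\sigma_\tau)\bigr)$, an empty sum when $t=1$.

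The substance of the argument is to obtain the identical recursion for $q_t$. The key structural fact is that at most one match can be in progress just before $\sigma_t$, and it must be the most recent one: a match at $\sigma_{\tau'}$ requires the unit to have been free just before $\sigma_{\tau'}$, so any earlier match already ended before $\sigma_{\tau'}<\sigma_t$. Consequently the event ``unit in use just before $\sigma_t$'' is the disjoint union over $\tau<t$ of the events ``a match occurs at $\sigma_\tau$ and its duration $d_\tau$ exceeds $\sigma_t-\sigma_\tau$''; no ``and no match in $(\sigma_\tau,\sigma_t)$'' clause is needed, since $d_\tau>\sigma_t-\sigma_\tau$ already forces the unit to be busy throughout that interval. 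Now $d_\tau$ is drawn fresh from $F$ and is independent of everything that determined whether a match occurred at $\sigma_\tau$, and ``a match occurs at $\sigma_\tau$'' is the intersection of ``unit free just before $\sigma_\tau$'' with the independent fresh $p_\tau$-coin firing; hence the $\tau$-th event has probability $p_\tau q_\tau\bigl(1-F(\sigma_t-\sigma_\tau)\bigr)$, and summing yields $q_t=1-\sum_{\tau<t}p_\tau q_\tau\bigl(1-F(\sigma_t-\sigma_\tau)\bigr)$. Since this is the recursion satisfied by $\delta_t$ with the same initial value, induction gives $q_t=\delta_t$ for all $t$.

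It remains to read off the stated identities. Arrival $t$ is matched precisely when the unit is free just before $\sigma_t$ and its $p_t$-coin fires, which has probability $p_t q_t=p_t\delta_t$ — exactly the fraction of the resource consumed (matched) at $\sigma_t$ in the fluid process, while $q_t=\delta_t$ itself is the availability statement. Summing over $t$, the expected reward of the random process, $\sum_t p_t q_t$, equals the total fluid reward $\sum_t p_t\delta_t$. The only genuinely delicate point is turning the ``only the most recent match can still be active'' picture into a rigorous disjoint decomposition, together with the two independence claims (the freshly drawn duration $d_\tau$ being independent of the match-occurrence event at $\sigma_\tau$, and each $p_\tau$-coin being independent of the unit's availability at $\sigma_\tau$); the rest is bookkeeping.
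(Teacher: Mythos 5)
Your proof is correct and follows essentially the same route as the paper: both sides identify a common recursion for the availability (a probability in the random process, a fraction in the fluid process) with the same initial condition, and conclude equality by forward induction. You write the recursion in the equivalent ``mass still out'' form $q_t = 1-\sum_{\tau<t}p_\tau q_\tau\bigl(1-F(\sigma_t-\sigma_\tau)\bigr)$ rather than the paper's incremental form, and your disjoint-decomposition argument (only the most recent match can still be active) spells out the justification of the random-process recursion that the paper leaves implicit.
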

\begin{proof}
	The proof hinges on the fact that in the random process, the durations and randomness in state transitions are independent of past randomness. We can therefore write a recursive equation for the probability of reward at every arrival in the random process. For every arrival in $\bsigma$, let $\eta(\sigma_t)$ denote the probability that the resource is available when $\sigma_t$ arrives. We have the following recursion for every arrival,
	\[ \eta(\sigma_t)=\eta(\sigma_{t-1})\big(1-p_{t-1}\big)+\sum_{\tau=1}^{t-1} \eta(\sigma_{\tau}) p_{\tau}\big(F(\sigma_t-\sigma_{\tau})-F(\sigma_{t-1}-\sigma_{\tau})\big), \]
	where $\eta(\sigma_1)=1$. By forward induction, it is easy to see that this set of equations has a unique solution. Now, observe that if we were to use $\eta(\sigma_t)$ to represent the fraction of resource available at $\sigma_t$ in the fluid process, then we would obtain the same recursive relation with the same starting condition of $\eta(\sigma_1)=1$. Thus, the probability of resource availability at $\sigma_t$ in the random process is exactly equal to the fraction of resource available at $\sigma_t$ in the fluid process. Therefore, the expected reward $p_t\eta(\sigma_t)$ from a match occurring at $\sigma_t$ in the random process equals the reward from consumption at $\sigma_t$ in the fluid process. %This completes the proof.%Then a simple proof by induction on the number of arrivals in the set $\bsigma$ gives the desired. The base case of a singleton set $\bsigma$ is trivial. Assuming true for sets of size $T$, in case of a $T+1$-th arrival we use the recursive equation and induction assumption along with independence to obtain that the probability of matching to the $T+1$-th arrival in the random process is exactly the same as the fraction consumed by this arrival in the fluid version.
\end{proof}

\begin{lemma}\label{galgvopt}
	%Given usage durations $F_i$ that do not have any mass at $0$ (this is essentially w.l.o.g.\ as for all practical purposes any two arrivals will be separated by at least some small non-zero interval of time $\epsilon>0$),
	   For every instance of the problem i.e., graph $G$, arrival times $A(T)$, and usage distributions $F_i$, we have,
	  \[\galg \geq \alpha \opt, \text{ with } \alpha=\Big(1-1/e-O\Big(\frac{1}{c_{\min}}\Big)\Big).\]
\end{lemma}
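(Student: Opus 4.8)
The plan is to prove the bound by a primal--dual / path-certificate argument, in the form of conditions (\ref{cert1})--(\ref{cert2}) but with \galg\ in place of \alg; since \galg\ is deterministic and we only re-partition the reward it already collects, we can take $\beta=1$ and push all slack into the $\alpha_i$. Concretely, when \galg\ matches an infinitesimal fraction $\mathrm{d}y$ of arrival $t$ to a unit of resource $i$ at level $x=k^*(i)/c_i$, it earns $r_i\,\mathrm{d}y$, and we split this as $r_i\bigl(1-g(x)\bigr)\mathrm{d}y$ into the ``price'' variable $\lambda_t$ and $r_i\,g(x)\,\mathrm{d}y$ into the ``protection'' variable $\theta_i$. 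Integrating over all fractional matches gives $\sum_t\lambda_t+\sum_i\theta_i=\galg$, which is (\ref{cert1}) with $\beta=1$; and because \galg's inner loop processes each arrival in nonincreasing order of reduced price, one gets the pointwise bound $\lambda_t\ge r_i\bigl(1-g(y_i^{\mathrm{end}}(t))\bigr)$ for every resource $i$ with an edge to $t$, where $y_i^{\mathrm{end}}(t)$ is the fraction of $i$ still available in \galg\ just after $t$ is processed (if $t$ is not fully matched, every such $i$ has $y_i^{\mathrm{end}}(t)=0$ and this reads $\lambda_t\ge 0$).

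The real content is (\ref{cert2}) for a fixed $i$. Plugging in the pointwise bound on $\lambda_t$ over $t\in O(\omega,i)$, it reduces to showing that $\theta_i$ absorbs the residual $\eo\!\bigl[\sum_{t\in O(\omega,i)}r_i\bigl(g(y_i^{\mathrm{end}}(t))-(1-\alpha_i)\bigr)\bigr]$ with $1-\alpha_i\approx 1/e$. For non-reusable resources this is the textbook telescoping: $\theta_i\approx r_i c_i\!\int_{y_i^{\min}}^{1}g(\ell)\,\mathrm{d}\ell=r_i c_i\bigl(g(y_i^{\min})-1/e\bigr)$ for $y_i^{\min}=\min_t y_i^{\mathrm{end}}(t)$, and since $|O(\omega,i)|\le c_i$ and $y_i^{\mathrm{end}}(t)\ge y_i^{\min}$, the identity $\bigl(1-g(y)\bigr)+\bigl(g(y_i^{\min})-1/e\bigr)\ge 1-1/e$ (for $y\ge y_i^{\min}$) closes it with $\alpha_i=1-1/e$. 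Reusability breaks this on two fronts --- a unit's available fraction is no longer monotone (earlier consumption returns according to $F_i$, and units are consumed repeatedly, so $\theta_i$ is an integral of $g$ against the whole history-dependent consumption profile), and $|O(\omega,i)|$ can far exceed $c_i$. To repair it, I would route both sides through Lemma~\ref{equiv}: the fluid evolution of each unit $k$ of $i$ in \galg\ is exactly a fluid $(F_i,A(T),\mb{p}^{i,k})$ process for the induced fractions $\mb{p}^{i,k}$, so its contribution to $\theta_i$ equals the reward $r(F_i,A(T),\mb{p}^{i,k})$ of the matching random process; and \opt's use of the $c_i$ units of $i$ on sample path $\omega$ is controlled --- after averaging over the $c_i$ units and over $\omega$, losing only $O(1/c_{\min})$ from discretising the levels $k/c_i$ --- by random processes supported on $\{a(t):t\in O(\omega,i)\}$ with unit fractions $\mb{1}_{\bsigma}$. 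A monotonicity property of $r(F,\bsigma,\mb{p})$ in $\bsigma$ and $\mb{p}$, together with the reduced-price guarantee that \galg\ never damps the fraction it sends to $i$ below what $g(y_i^{\mathrm{end}}(t))$ prescribes, then reinstates the $1-1/e$ telescoping in expectation over $\omega$.

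The main obstacle is precisely this last comparison: \opt\ may keep $i$ essentially fully loaded --- effectively unit fractions $\mb{1}_{\bsigma}$ over its arrival points --- while \galg\ deliberately under-loads $i$ through $g$ to hedge against adversarial future arrivals, and $i$'s availability in \galg\ also replenishes over time. Showing that this hedged, replenishing process still captures a $(1-1/e)$ share of the unrestricted one, uniformly over all bipartite graphs, arrival-time sets $A(T)$, distributions $F_i$, and realizations $\omega$, is where Lemma~\ref{equiv} and the monotonicity of the generalized random-process reward $r(F,\bsigma,\mb{p})$ do the real work; the remainder is the standard $g(x)=e^{-x}$ calculus and the bookkeeping that converts sums over unit indices $k\in[c_i]$ into integrals, which is also the source of the additive $O(1/c_{\min})$ loss.
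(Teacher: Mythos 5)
Your architecture is the paper's: the same path-based certificate, the same split of each infinitesimal match $r_i\,\mathrm{d}y$ into $r_i(1-g(\cdot))\,\mathrm{d}y$ for $\lambda_t$ and $r_i\,g(\cdot)\,\mathrm{d}y$ for $\theta_i$, and the same plan of converting both \galg's fluid consumption of a unit and \opt's stochastic usage into statements about the $(F,\bsigma,\mb{p})$ processes via Lemma \ref{equiv} and monotonicity. But the step you defer to ``monotonicity of $r(F,\bsigma,\mb{p})$ doing the real work'' is exactly where your sketch is not yet a proof, and as written it points at the wrong objects. First, the pointwise bound on $\lambda_t$ must be per unit, not per resource: what the greedy inner loop gives you is $\lambda_t\geq r_i(1-g(k/c_i))$ for \emph{every unit} $k$ of $i$ with positive availability at $t^+$, and the $1/e$ is then absorbed by telescoping $1-g(k/c_i)=1-1/e-\sum_{k'>k}\Delta g(k')$ against the per-unit indicators $\onee(\neg k,t^+)$. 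Your version in terms of the aggregate available fraction $y_i^{\mathrm{end}}(t)$ is the Inventory-Balancing quantity; under reusability it is not the quantity the algorithm prices with (RBA prices by the highest available unit index), and it does not pair with the per-unit random-process comparison you need next.

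Second, and more importantly, the comparison random process is supported on the wrong set. You place it on $\{a(t):t\in O(\omega,i)\}$ with unit probabilities; the argument actually requires it on $\mb{s}(k)$, the set of arrivals at which unit $k$ is \emph{fully unavailable in \galg\ after the match}. The missing ingredient is the coupling (Lemma 12 of \cite{rba}) giving, for each unit $k_O$ of $i$ in \opt,
\begin{equation*}
\eo\Big[\sum_{t\in O(\omega,k_O)}\onee(\neg k,t^+)\Big]\;\leq\; r\big(F_i,\mb{s}(k)\big),
\end{equation*}
which is the only place \opt's sample-path behavior is converted into a deterministic quantity; without it, monotonicity in $\bsigma$ and $\mb{p}$ has nothing to act on. One then needs the observation that every arrival in $\mb{s}(k)\setminus\bsigma(k)$ has zero availability of $k$ (so unit probabilities can be inserted there for free, Lemma \ref{zeroset}), after which monotonicity yields $r(F_i,\mb{s}(k))\leq r(F_i,\mb{\Sigma},\mb{p}(k)\vee\mb{1}_{\mb{s}(k)})=r(F_i,\mb{\Sigma},\mb{p}(k))=\sum_t y_i(k,t)$, i.e.\ exactly unit $k$'s contribution to $\theta_i$. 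Your intuition that \opt\ ``keeps $i$ fully loaded while \galg\ hedges'' is the right picture, but the charge is not \opt's reward against \galg's reward on the same arrival set; it is \opt's \emph{collisions with $k$'s unavailability} against $k$'s \emph{own fluid reward}, summed over $k$ and $k_O$ with the $(e^{1/c_i}-1)$ factor supplying the $O(1/c_{\min})$ loss.
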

\begin{proof}
	We use the path-based certification with the following variable setting,
	\begin{eqnarray}
	\lambda_t&&=\sum_{i\in I} r_i\sum_{k\in[c_i]} y_i(k,t) \bigg(1-g\Big(\frac{k}{c_i}\Big)\bigg),\label{lambda}\\
	\theta_{it}&&=r_i\sum_{k\in[c_i]} y_i(k,t) g\Big(\frac{k}{c_i}\Big),\nonumber\\
	\theta_i&&= e^{\frac{1}{c_i}}\sum_{t} \theta_{it} \label{theta}.
	\end{eqnarray}
	Observe that the sum $\sum_t \lambda_t +\sum_i e^{-\frac{1}{c_i}}\theta_i$ is exactly the total revenue of $\galg$. Therefore, condition \eqref{cert1} is satisfied with $\beta=e^{\frac{1}{c_{\min}}}$. Fix an arbitrary resource $i$, we would like to show condition \eqref{cert2} i.e.,
	\[\eo\Big[\sum_{t\in O(\omega,i)} \lambda_t\Big] +\theta_i \geq \alpha_i r_i \opt_i, \]
	%	\[\eo[\sum_{t\in O(i,k_O)} \lambda_t] +\frac{1}{c_i}\theta_i \geq \alpha \frac{r_i}{c_i} \opt_i. \]
	for $\alpha_i=1-1/e$. To show this we often need to refer to the state of \galg\ just after $t$ departs the system. In discussing this it is imperative to avoid any inconsistencies in boundary cases (such as when $F_i(0)=0$). Therefore, we formally consider the following sequence of events at any moment $a(t)$ when an arrival occurs: (i) Units that were in use and set to return at $a(t)$ are made available and inventories updated accordingly. (ii) Arrival $t$ is (fractionally) matched and state of matched units is adjusted by reducing inventory. Accordingly, we refer to the state of the system after step (i) is performed as the state at $t^-$ and the state after step $(ii)$ is performed as state at $t^+$.
	Correspondingly, for each arrival $t$ with an edge to $i$, and every unit $k$, we define indicator $\onee(\neg k,t^+)$ that takes value one if no fraction of $k$ is available at $t^+$. In other words, this indicates the unavailability of $k$ right after $t$ has been fractionally matched and state of units updated accordingly. %v denote the value of state variable $z_i(k)$ at $t^+$. Note that $z_i(k,t^+)>0$ implies %some fraction of $k$ is available at $t^+=t+\epsilon$ for infinitesimally small $\epsilon>0$ i.e., immediately after arrival $t$ departs. %Since $F_i(0)=0$, 
	%This only occurs if 
	%a fraction of $k$ was available at $t^-$ and not matched to arrival $t$. 
	Consequently, for every $i\in S_t$ and $k\in[c_i]$ with $\onee(\neg k,t^+)=0$ i.e., some fraction of $k$ available at $t^+$, we have by definition of \galg\ and $\lambda_t$ in \eqref{lambda},
	\[\lambda_t \geq r_i\Big(1-g\Big(\frac{k}{c_i}\Big)\Big). \]
	Let $\Delta g(k)= g\big(\frac{k-1}{c_i}\big)-g\big(\frac{k}{c_i}\big)$. %and $\onee\big(z_i(k,t^+)=0\big)$ equal 1 if $z_i(k,t^+)=0$ and 0 otherwise. 
	Using this definition together with above observation, we decompose the contribution from $\lambda_t$ terms to \eqref{cert2} as follows,
	\begin{eqnarray}
	\eo\Big[\sum_{t\in O(\omega,i)} \lambda_t\Big]\geq r_i \eo\Big[\sum_{t\in O(\omega,i)} \Big(1-1/e-\sum_{k\in[c_i]} \Delta g (k) \onee(\neg k,t^+)\Big)\Big].\label{decompose}
	\end{eqnarray}
Based on this decomposition, we will now focus on proving that,
	\begin{eqnarray}
	 \eo\Big[\sum_{t\in O(\omega,i)}\sum_{k\in[c_i]} \Delta g (k) \onee(\neg k,t^+)\Big]\leq \frac{1}{r_i}\theta_i,\label{interim}
\end{eqnarray}
%Before we show \eqref{interim}, 
Let us see why this would prove inequality \eqref{cert2} for resource $i$. Substituting \eqref{interim} in \eqref{decompose} we get,
\begin{eqnarray*}
		\eo\Big[\sum_{t\in O(\omega,i)} \lambda_t\Big]&&\geq (1-1/e)r_i\opt_i-\theta_i,
	%\eo\Big[\sum_{t\in O(\omega,i)} \lambda_t\Big]+\theta_i	&&\geq (1-1/e)r_i\opt_i
\end{eqnarray*}
which proves \eqref{cert2} with $\alpha_i=1-1/e$. The remaining proof is dedicated to establishing \eqref{interim}. Similar to previous work (Lemma 12 in \cite{rba}), we use a coupling between \galg\ and \opt\ to transform \eqref{interim} into a statement about properties of $(F_i,\bsigma,\mb{p})$ random processes. 

Fix resource $i$, unit $k$ in \galg, and unit $k_O$ in \opt. Let $\mb{s}(k)$ denote the set of arrival times where $\onee(\neg k,t^+)=1$ in \galg. Arrival times in $\mb{s}(k)$ are ordered in ascending order. For convenience we refer to arrivals and arrival times interchangeably and often refer to arrivals in the set $\mb{s}(k)$. We claim that at every arrival  $t\in\mb{s}(k)$, either $k$ is (fractionally) matched to $t$ or there exists an arrival $\tau<t$ to which $k$ is (fractionally) matched and subsequently, $\onee(\neg k,t'^+)=1$ for all $t'\in(\tau,t]$. This follows from the fact that $z_i(k)$ decreases only if some part of $k$ is matched. Given this, let $\bsigma(k)$ denote the subset of arrivals $t\in \mb{s}(k)$ to which $k$ is fractionally matched. Therefore, $\bsigma(k)$ is exactly the set of arrivals where some nonzero fraction of $k$ is available and this fraction is completely matched to the arrival. %Contrast this with the set of arrivals , denoted $\mb{s}(k)$.  %First, we will show that,
%\begin{equation}
%\eo[\sum_{t\in O(i,k_O)} \onee(z_i(k,t^+)=0)]\leq r(F_i,\bsigma(k)). \label{step1}
%\end{equation}
%Define $\mb{s}(k)$ as the set of arrivals where $z_i(k,t^+)=0$. Note that Observe that $\bsigma(k)\subseteq \mb{s}(k)$. 
From the coupling used in Lemma 12 in \cite{rba} and equivalent form in \cite{reuse}, we have,
\[ \eo\Big[\sum_{t\in O(\omega, k_O)} \onee(\neg k,t^+)\Big]\leq r(F_i,\mb{s}(k)),\]
%To complete the proof of inequality \eqref{step1}, 
where $O(\omega,k_O)$ denotes the set of arrivals matched to unit $k_O$ of $i$ in \opt. %We further claim that,

We would now like to compare $r(F_i,\mb{s}(k))$ with the expected reward of \galg\ from matching $k$. To make this connection we interpret the actions of \galg\ through a generalized random process. %Consider the time ordered set $\bsigma'(k)$ of all arrivals which are fractionally matched to unit $k$ in \galg. 
Consider the time ordered set $\mb{\Sigma}$ of all arrivals with an edge to $i$. Consider some arrival $t\in \mb{\Sigma}$ and associate a probability $p_t(k)$ with it as follows: If no fraction of $k$ is matched to $t$ we set $p_t(k)=0$. If some fraction $\eta_t(k)$ of $k$ is available for match to $t$ and a fraction $\gamma_t(k)\leq \eta_t(k)$ is actually matched to $t$, we set $p_t(k)=\frac{\gamma_t(k)}{\eta_t(k)}$. Note that this associates a probability of one with every arrival in $\bsigma(k)\subseteq \mb{\Sigma}$, which recall, is exactly the set of arrivals where some nonzero fraction of $k$ is available and fully matched in \galg. We use $\mb{p}(k)$ to denote the ordered set of probabilities corresponding to $\mb{\Sigma}$. Now, consider the $(F_i,\mb{\Sigma},\mb{p}(k))$ random process and its fluid version. By definition of $\mb{p}(k)$, the fluid $(F_i,\mb{\Sigma},\mb{p}(k))$ process corresponds exactly to the matching of unit $k$ in \galg. % the reward from matching $k$ in \galg\ is exactly $r_i$ times the total reward generated in the fluid version of $(F_i,\mb{\Sigma},\mb{p}(k))$ process. 
Applying Lemma \ref{equiv} we have that $r_i\times r(F_i,\mb{\Sigma},\mb{p}(k))$ is exactly equal to the the total reward generated from matching $k$ in \galg\ i.e., %To see this, first .  gives us the desired. Therefore,
\[ r\big(F_i,\mb{\Sigma},\mb{p}(k)\big) = \sum_t y_i(k,t).  \]
Next, we claim that the fraction of unit $k$ available in \galg\ at arrivals in $\mb{s}(k)\backslash \bsigma(k)$ is zero. By lemma \ref{equiv} this implies that in the $(F_i,\mb{\Sigma},\mb{p}(k))$ process, probability of matching to any arrival $t\in\mb{s}(k)\backslash \bsigma(k)$ is zero. The claim essentially follows by definition; in \galg\ any arrival in $\mb{s}(k)$ where some non-zero fraction of $k$ is available to match must also be an arrival in $\bsigma(k)$. Thus, all arrivals in $\mb{s}(k)\backslash \bsigma(k)$ are such that no fraction of $k$ is available to match in \galg. %This shows that in \galg\ no fraction of $k$ is matched to arrivals in $\mb{s}(k)\backslash \bsigma(k)$. 
Now, consider the augmented set of probabilities $\mb{p}(k)\vee \mb{1}_{\mb{s}(k)}$, which denotes that we set a probability of one for every arrival in $\mb{s}(k)$. Applying Lemma \ref{zeroset} we have that,
\[r\big(F_i,\mb{\Sigma},\mb{p}(k)\vee \mb{1}_{\mb{s}(k)}\big)=r\big(F_i, \mb{\Sigma},\mb{p}(k)\big).\]
Finally, using the monotonicity Lemma \ref{monotone} (proven subsequently), we have,
\[ r\big(F_i,\mb{s}(k)\big)\leq r\big(F_i,\mb{\Sigma},\mb{p}(k)\vee \mb{1}_{\mb{s}(k)}\big)= r\big(F_i,\mb{\Sigma},\mb{p}(k)\big),\] % = \sum_t y_i(k,t). \]
where we used the equivalent form $(F_i,\mb{\Sigma},\mb{0}\vee \mb{1}_{\mb{s}(k)})$ for the random process $(F_i,\mb{s}(k))$. The set $\mb{0}\vee \mb{1}_{\mb{s}(k)}$ represents a probability of zero corresponding to arrivals in $\mb{\Sigma}\backslash \mb{s}(k)$ and a probability of one for arrivals in $\mb{s}(k)$.
Simple algebra now completes the proof of \eqref{interim},
\begin{eqnarray*}
	 \eo\Big[\sum_{t\in O(\omega,i)}\sum_{k\in[c_i]} \Delta g (k) \onee(\neg k,t^+)\Big]&&\leq  \sum_{k_O \in [c_i]} \eo\Big[\sum_{t\in O(\omega,k_O)}\sum_{k\in[c_i]} g\Big(\frac{k}{c_i}\Big)(e^{\frac{1}{c_i}} -1) \onee(\neg k,t^+)\Big],\\
	 &&\leq \sum_{k_O \in [c_i]}\sum_{k \in [c_i]}g\Big(\frac{k}{c_i}\Big)(e^{\frac{1}{c_i}} -1) \, r\big(F_i,\mb{\Sigma},\mb{p}(k)\big),\\
	 &&\leq e^{\frac{1}{c_i}}\sum_{k \in [c_i]} \Big[g\Big(\frac{k}{c_i}\Big) \sum_t y_i(k,t)\Big],\\
	&& = \frac{1}{r_i} \theta_i.
\end{eqnarray*}
\end{proof}
\begin{lemma}\label{zeroset}
	Given a $(F,\bsigma,\mb{p})$ random process let $\bsigma' \subset \bsigma$ be some subset of arrivals such that, at each arrival in the set the probability of resource being in available state is zero. Associate the probability set $\mb{1}_{\bsigma'}$ of unit probabilities with arrivals in $\bsigma'$. Then, the random processes $(F,\bsigma,\mb{p})$ and $(F,\bsigma,\mb{p}\vee \mb{1}_{\bsigma'})$ are equivalent i.e., the availability of resource is the same in both processes at every arrival in $\bsigma$. 
\end{lemma}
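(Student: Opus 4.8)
The plan is to pass to the fluid process and then run a forward induction on the arrivals. By Lemma~\ref{equiv}, for any probability vector $\mb{q}$ the probability that the resource is available at $\sigma_t$ in the $(F,\bsigma,\mb{q})$ random process equals the fraction $\eta(\sigma_t)$ of the resource available at $\sigma_t$ in the fluid $(F,\bsigma,\mb{q})$ process, and this fraction is the unique solution of the forward recursion
\[ \eta(\sigma_t)=\eta(\sigma_{t-1})\big(1-q_{t-1}\big)+\sum_{\tau=1}^{t-1}\eta(\sigma_\tau)\,q_\tau\big(F(\sigma_t-\sigma_\tau)-F(\sigma_{t-1}-\sigma_\tau)\big),\qquad \eta(\sigma_1)=1. \]
Hence it suffices to show that the fluid processes run with $\mb{p}$ and with $\mb{p}':=\mb{p}\vee\mb{1}_{\bsigma'}$ produce the same availability sequence $\eta(\sigma_1),\dots,\eta(\sigma_T)$; call the two solutions $\eta$ and $\eta'$. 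Note that $p'_\tau=p_\tau$ whenever $\sigma_\tau\notin\bsigma'$, while $p'_\tau=1\ge p_\tau$ when $\sigma_\tau\in\bsigma'$, and that by the hypothesis defining $\bsigma'$ together with Lemma~\ref{equiv} we have $\eta(\sigma_\tau)=0$ for every $\sigma_\tau\in\bsigma'$.

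I would then prove $\eta'(\sigma_t)=\eta(\sigma_t)$ for all $t$ by strong induction on $t$. The base case $t=1$ is immediate since both equal $1$. For the inductive step, expand the recursion for $\eta'(\sigma_t)$: each summand has the form $\eta'(\sigma_\tau)\big(1-p'_\tau\big)$ or $\eta'(\sigma_\tau)\,p'_\tau(\cdots)$ with $\tau<t$. If $\sigma_\tau\notin\bsigma'$, then $p'_\tau=p_\tau$ and $\eta'(\sigma_\tau)=\eta(\sigma_\tau)$ by the induction hypothesis, so the summand is unchanged from the recursion for $\eta(\sigma_t)$. If $\sigma_\tau\in\bsigma'$, then $\eta'(\sigma_\tau)=\eta(\sigma_\tau)=0$ (by the induction hypothesis and the observation above), so the summand vanishes—exactly as the corresponding summand $\eta(\sigma_\tau)p_\tau(\cdots)$ does in the recursion for $\eta(\sigma_t)$. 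Summing the two recursions term by term gives $\eta'(\sigma_t)=\eta(\sigma_t)$, completing the induction. Translating back through Lemma~\ref{equiv}, the availability probabilities agree at every arrival of $\bsigma$ in the two random processes, which is the assertion of the lemma.

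The one point that genuinely needs care — and the place I expect a reader to pause — is the apparent circularity in using ``availability is zero at $\bsigma'$'' for the \emph{modified} process $\mb{p}'$, since a priori raising probabilities could disturb downstream availabilities. The forward induction is exactly what dissolves this: by the time the argument reaches an arrival $\sigma_\tau\in\bsigma'$, its availability $\eta'(\sigma_\tau)$ has already been determined to equal $\eta(\sigma_\tau)=0$ by the strictly earlier arrivals, whose probabilities were left untouched; multiplying that zero by $p_\tau$ versus by $1$ therefore has no effect, and the perturbation never propagates. No other step of the proof presents any difficulty.
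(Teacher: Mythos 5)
Your proof is correct, and it takes a somewhat different route from the paper's. The paper reduces to the case where $\bsigma'$ is a single arrival $t$, argues directly on the random process that the probability attached to $t$ cannot affect availability at $t$ itself (which depends only on strictly earlier arrivals), concludes that the resource is still in-use at $t$ w.p.\ $1$ after raising $p_t$ to $1$ -- so the activation at $t$ never fires and nothing downstream changes -- and then iterates over the arrivals of $\bsigma'$. You instead pass to the fluid recursion of Lemma~\ref{equiv} and run one strong induction over all arrivals simultaneously, showing the two availability sequences $\eta$ and $\eta'$ coincide term by term. The two arguments rest on the same insight (a probability attached to an arrival with zero availability is inert), but your version makes explicit two points the paper's one-liner leaves implicit: that the perturbation does not propagate to \emph{later} arrivals, and that in the ``repeated application'' step the remaining arrivals of $\bsigma'$ still have zero availability in the already-modified process. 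Your simultaneous induction sidesteps that bookkeeping entirely, at the cost of being longer and of routing through Lemma~\ref{equiv} rather than arguing on the random process directly; both are acceptable, and your identification of the forward induction as the device that dissolves the apparent circularity is exactly right.
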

\begin{proof}
	 	Observe that it suffices to show the lemma for a subset $\bsigma'$ consisting of a single arrival. The result for general $\bsigma'$ then follows by repeated application. So let $t$ denote an arbitrary arrival in $\bsigma$ such that in $(F,\bsigma,\mb{p})$ process the resource is in-use at $t$ w.p.\ 1. Observe that changing the probability associated with $t$ does not change the probability of resource being available at $t$ in the random process. Therefore, resource is in-use at $t$ w.p.\ 1, even in the $(F,\bsigma,\mb{p}\vee \mb{1}_t)$ random process where the probability associated with $t$ is one. %From Lemma \ref{equiv} we have that in the fluid $(F,\bsigma,\mb{p})$ process the fraction of resource available at $t$ is zero. Therefore, in the fluid $(F,\bsigma,\mb{p}\vee \mb{1}_t)$ process
\end{proof}
\begin{lemma}\label{monotone}
	For any distribution $F$, ordered set $\bsigma=\{\sigma_1,\cdots,\sigma_m\}$ of arrivals and associated probability sets $\mb{p}_1=\{p_{11},\cdots,p_{1m}\}$ and $\mb{p}_2=\{p_{21},\cdots,p_{2m}\}$ such that $p_{1t}\leq p_{2t}$ for every $t\in[m]$, we have,
	\[r(F_i,\bsigma,\mb{p}_1)\leq r(F_i,\bsigma,\mb{p}_2) .\]
\end{lemma}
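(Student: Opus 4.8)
It suffices to prove the claim when $\mb p_1$ and $\mb p_2$ agree in all coordinates but one, say coordinate $j$ (the general case follows by changing one coordinate at a time). By Lemma~\ref{equiv} I will argue entirely in the fluid $(F,\bsigma,\cdot)$ process, whose total reward equals $r(F,\bsigma,\cdot)$. Write $\overline F:=1-F$; let $q_t$ be the fraction consumed at $\sigma_t$ and $f_t$ the fraction available just before $\sigma_t$. By definition of the fluid process the fraction in use at $\sigma_t$ is $\sum_{\tau<t}q_\tau\bigl(1-F(\sigma_t-\sigma_\tau)\bigr)$, so $q_t=p_tf_t$ with $f_t=1-\sum_{\tau<t}q_\tau\,\overline F(\sigma_t-\sigma_\tau)$, and hence $\mb q=(q_1,\dots,q_m)$ is the unique solution of the lower-triangular system $M\mb q=\mathbf 1$ with $M_{tt}=1/p_t$ and $M_{t\tau}=\overline F(\sigma_t-\sigma_\tau)$ for $\tau<t$. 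Since $q_t$ is a polynomial in $p_1,\dots,p_t$, the reward $R(\mb p)=\sum_t q_t=\mathbf 1^\top M^{-1}\mathbf 1$ is a polynomial in the probabilities; arrivals with $p_t=0$ may be deleted with no effect, so I assume $p_t>0$ for all $t$ and handle $p_j=0$ at the end by continuity.

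The plan is to differentiate $R$ with respect to $x:=1/p_j$, holding the other coordinates fixed. Since $M$ depends on $x$ only through the $(j,j)$ entry and linearly, $\partial M/\partial x=e_je_j^\top$, hence $\partial M^{-1}/\partial x=-M^{-1}e_je_j^\top M^{-1}$ and
\[
\frac{\partial R}{\partial x}=-\big(\mathbf 1^\top M^{-1}e_j\big)\big(e_j^\top M^{-1}\mathbf 1\big)=-\,\big(M^{-\top}\mathbf 1\big)_j\,\big(M^{-1}\mathbf 1\big)_j .
\]
Here $\big(M^{-1}\mathbf 1\big)_j=q_j\ge 0$ because $\mb q$ is a vector of consumed fractions. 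So the whole argument reduces to showing that $v_j:=\big(M^{-\top}\mathbf 1\big)_j\ge 0$; granting this, $\partial R/\partial x\le 0$, i.e.\ $R$ is nonincreasing in $1/p_j$ and hence nondecreasing in $p_j$, as desired; the edge case $p_j=0$ follows since $R$ is continuous and monotone on $(0,1]$.

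The crux is therefore the nonnegativity of $\mb v:=M^{-\top}\mathbf 1$, the solution of $M^\top\mb v=\mathbf 1$. Solving this triangular system backwards gives $v_t=p_t\big(1-\sum_{\tau>t}\overline F(\sigma_\tau-\sigma_t)\,v_\tau\big)$. I claim that, after reversing the index set, this is precisely the recursion for the consumed fractions of the \emph{time-reversed} fluid instance: fix $C>\sigma_m$ and take arrivals $\widetilde\sigma_k:=C-\sigma_{m+1-k}$ (so $0<\widetilde\sigma_1<\dots<\widetilde\sigma_m$), probabilities $\widetilde p_k:=p_{m+1-k}$, and the same distribution $F$. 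Because $\widetilde\sigma_k-\widetilde\sigma_\kappa=\sigma_{m+1-\kappa}-\sigma_{m+1-k}$, substituting $t=m+1-k$ and $\tau=m+1-\kappa$ turns the recursion for $\mb v$ into the recursion for the consumed fractions $\widetilde{\mb q}$ of this reversed instance, whence $v_t=\widetilde q_{\,m+1-t}\ge 0$. Equivalently, $\partial R/\partial p_j=v_jq_j/p_j^2$ equals the product of the available fraction at $\sigma_j$ in the original instance and the available fraction at $\widetilde\sigma_{m+1-j}$ in the reversed instance, each lying in $[0,1]$. Establishing this time-reversal identity cleanly is where I expect the real work to be.

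Finally, a remark on why I would not attempt a direct probabilistic coupling: one would like to couple the $\mb p_1$- and $\mb p_2$-random processes so that the latter dominates the former along every sample path, but this fails. At the single arrival where the $\mb p_2$-process consumes while the $\mb p_1$-process does not, the $\mb p_2$-process earns $+1$ but is then in use, while the $\mb p_1$-process stays free and can collect the rewards of \emph{several} later arrivals; so the $+1$ need not compensate pathwise. Monotonicity holds only after averaging over the usage duration $\mb d\sim F$ — exactly what the linear-algebraic identity above packages, and where the fact that $F$ is a distribution function ($\overline F$ monotone, $v_t\in[0,p_t]$) gets used.
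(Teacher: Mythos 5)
Your proof is correct, and it takes a genuinely different route from the paper's. The paper's argument is a short coupling: sample the activation coins in advance so that every arrival activated under $\mb{p}_1$ is also activated under $\mb{p}_2$ (activating each remaining arrival independently with probability $p_{2t}-p_{1t}$), so that the activated arrival set under $\mb{p}_2$ contains that under $\mb{p}_1$ on every sample path, and then invoke the monotonicity of $r(F,\cdot)$ in the arrival set (Lemma 13 of \cite{rba}). That cited lemma is exactly where the difficulty you identify in your closing paragraph is absorbed: pathwise domination of rewards indeed fails, and the averaging over usage durations is handled there. Your argument instead stays entirely in the fluid picture via Lemma \ref{equiv}: writing $R(\mb{p})=\mathbf{1}^{\top}M^{-1}\mathbf{1}$ for the triangular system $M\mathbf{q}=\mathbf{1}$, you obtain $\partial R/\partial p_j = v_jq_j/p_j^2$ and prove $v_j\ge 0$ by identifying $\mb{v}=M^{-\top}\mathbf{1}$ with the consumed fractions of the time-reversed fluid instance. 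This is self-contained (no reliance on the external Lemma 13), yields an explicit sensitivity formula, and exposes a time-reversal duality; moreover, the substitution $t=m+1-k$ together with uniqueness of solutions to triangular systems and nonnegativity of fluid consumed fractions (again via Lemma \ref{equiv}) already completes the ``real work'' you flag at the end, so no gap remains. The one caveat worth noting is that your remark that a coupling approach fails applies only to naive pathwise domination of rewards; the paper's coupling succeeds precisely because it couples activations rather than rewards and delegates the averaging step to the prior set-monotonicity result.
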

\begin{proof}
	Sample arrival activations in advance and couple them so that any arrival activated due to in process $(F_i,\bsigma,\mb{p}_1)$ is also activated in process $(F_i,\bsigma,\mb{p}_2)$. Further, for every arrival $\sigma_t\in \mb{\sigma}$ that is not activated in process $(F_i,\bsigma,\mb{p}_1)$, allow the arrival to be active in process $(F_i,\bsigma,\mb{p}_2)$ w.p.\ $p_{2t}-p_{1t}$. So the overall set of activated arrivals in process 2 includes the set of arrivals in process 1. Now, the lemma follows from Lemma 13 in \cite{rba}.
\end{proof}
Now, recall the final online algorithm that experiences true stochasticity in usage, denoted \alg. Let $\delta_i=\sqrt{\frac{100\log c_i}{c_i}}$. When $t$ arrives, this algorithm samples a number $u\in U[0,1]$. Laying down the values $\frac{1}{1+\delta_i}x_{it}$ on the real line in arbitrary but fixed order over $i$, \alg\ matches $t$ to $i$ if $u$ falls in the relevant interval and $i$ is available. If $i$ is unavailable then $t$ is left unmatched.

\begin{lemma}\label{algvgalg}
For every resource $i$ with starting inventory $c_i\geq 3$, we have, $\alg_i \geq \frac{1-c^{-1}_i}{1+\delta_i} \galg_i$.
\end{lemma}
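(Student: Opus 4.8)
The plan is to reduce the claim to a concentration statement about the availability of resource $i$ under \alg. Since \galg\ is deterministic, each $x_{it}$ is a fixed number, $\galg_i=r_i\sum_t x_{it}$, and $\alg_i=r_i\sum_t\Pr[\alg\text{ matches }t\text{ to }i]$ (a match of $t$ to $i$ earns expected reward $r_i$, independently of whether it occurs). At arrival $t$, \alg\ draws a fresh $u\in U[0,1]$ independent of the past, hence independent of the current inventory $y_i(t)$; moreover $\sum_{j\in S_t}x_{jt}\le 1\le 1+\delta_i$, so the laid-down intervals are disjoint sub-intervals of $[0,1+\delta_i]$ and
\[\Pr[\alg\text{ matches }t\text{ to }i]=\frac{x_{it}}{1+\delta_i}\cdot\Pr\big[y_i(t)>0\big].\]
Thus it suffices to show $\Pr[y_i(t)>0]\ge 1-c_i^{-1}$ for every arrival $t$, and summing over $t$ yields the lemma.

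To estimate $\Pr[y_i(t)>0]$ I would dominate the number $N_i(t):=c_i-y_i(t)$ of units of $i$ in use when $t$ arrives by a history-independent quantity. On an enlarged space, attach to each arrival $\tau$ an independent $D_\tau\sim F_i$ (the duration it would receive if matched) and let $X_\tau\sim\mathrm{Bernoulli}(x_{i\tau}/(1+\delta_i))$ be the indicator that $u_\tau(1+\delta_i)$ lands in the length-$x_{i\tau}$ interval assigned to $i$; couple \alg\ so that a match of $\tau$ to $i$ uses duration $D_\tau$. Since \alg\ matches $\tau$ to $i$ only when $X_\tau=1$, pathwise
\[N_i(t)\;\le\;\tilde N_i(t):=\sum_{\tau:\,a(\tau)<a(t)}X_\tau\,\onee\big(D_\tau>a(t)-a(\tau)\big),\]
a sum of independent $\{0,1\}$-variables with $\mathbb E[\tilde N_i(t)]=\frac{1}{1+\delta_i}\sum_{\tau<t}x_{i\tau}\big(1-F_i(a(t)-a(\tau))\big)$. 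The key point is that $\sum_{\tau<t}x_{i\tau}(1-F_i(a(t)-a(\tau)))$ is exactly the fluid mass of $i$ in use at $t^-$ inside \galg: by the fluid update, the fraction $x_{i\tau}=\sum_k y_i(k,\tau)$ matched at $\tau$ leaves $1-F_i(a(t)-a(\tau))$ of itself still in use at time $a(t)$, and since $z_i(k)\in[0,1]$ throughout, this total never exceeds the capacity $c_i$. Hence $\mathbb E[\tilde N_i(t)]\le c_i/(1+\delta_i)$.

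It remains to apply a multiplicative Chernoff bound. Writing $\mu:=\mathbb E[\tilde N_i(t)]\le c_i/(1+\delta_i)$ and $1+\epsilon:=c_i/\mu\ge 1+\delta_i$,
\[\Pr[N_i(t)\ge c_i]\le\Pr\big[\tilde N_i(t)\ge(1+\epsilon)\mu\big]\le\Big(\tfrac{e^{\epsilon}}{(1+\epsilon)^{1+\epsilon}}\Big)^{\mu}\le\exp\!\Big(-\tfrac{\epsilon^2\mu}{2+\epsilon}\Big).\]
Since $\epsilon\mapsto \frac{\epsilon^2}{(1+\epsilon)(2+\epsilon)}$ is increasing and $\mu=c_i/(1+\epsilon)$, the exponent is at least $\frac{\delta_i^2 c_i}{(1+\delta_i)(2+\delta_i)}=\frac{100\log c_i}{(1+\delta_i)(2+\delta_i)}$; plugging in $\delta_i=\sqrt{100\log c_i/c_i}$ and noting that $\delta_i$ is decreasing in $c_i$ for $c_i\ge 3$ with maximum below $7$ (at $c_i=3$), one checks $(1+\delta_i)(2+\delta_i)\le 100$, so the exponent is at least $\log c_i$ and $\Pr[N_i(t)\ge c_i]\le c_i^{-1}$, i.e.\ $\Pr[y_i(t)>0]\ge 1-c_i^{-1}$.

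The step I expect to require the most care is the coupling: making the pathwise bound $N_i(t)\le\tilde N_i(t)$ fully rigorous within the sample-path/list formalism used for durations in \opt, and pinning down the $t^-/t^+$ boundary conventions so that the fluid in-use identity (hence $\mathbb E[\tilde N_i(t)]\le c_i/(1+\delta_i)$) holds with no off-by-one error. Once the mean bound is secured, the Chernoff estimate and the elementary verification that the chosen $\delta_i$ suffices for all $c_i\ge 3$ are routine.
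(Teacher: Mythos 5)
Your proposal is correct and follows essentially the same route as the paper: reduce to showing each arrival finds a unit of $i$ available with probability at least $1-c_i^{-1}$, dominate the number of in-use units by a sum of independent Bernoulli variables obtained by dropping the availability indicator, bound the mean by $c_i/(1+\delta_i)$ via the fluid capacity constraint of \galg, and finish with a multiplicative Chernoff bound. The only differences are cosmetic (you bound the exponent directly rather than passing through the intermediate estimate $c_i-2\sqrt{c_i\log c_i}$ as the paper does).
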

\begin{proof}
	The proof rests simply on showing that for every $(i,t)\in E$, $i$ is available at $t$ w.p. at least $1-c^{-1}_i$. This implies a lower bound of $\frac{1-c^{-1}_i}{1+\delta_i}x_{it}$ on the expected reward from matching $i$ to $t$, completing the proof. To prove the claim, consider an arbitrary edge $(i,t)\in E$ and let $\onee( i,t)$ indicate the event that some unit of $i$ is available in \alg\ to match to arrival $t$. Let $\onee(i\to t)$ indicate the event that random variable $u$ sampled at $t$ dictates $i$ be matched to $t$. Finally, let $\onee(d_t> a(\tau)-a(t))$ indicate that the duration of usage sampled for match at arrival $t$ is at least $a(\tau)-a(t)$. The event that a unit of $i$ is available when $t$ arrives is equivalent to the following event,
	\[ \sum_{\tau=1}^{t-1} \onee (i,\tau) \onee(i\to \tau)\onee(d_{\tau}> a(t)-a(\tau)) \leq c_i-1.\]
	The probability that this event occurs is lower bounded by the probability of the following event occurring,
	\[\sum_{\tau=1}^{t-1} \onee(i\to \tau)\onee(d_{\tau}> a(t)-a(\tau)) \leq c_i-\delta_i. \]
	Define Bernoulli random variables $X_{\tau}= \onee(i\to \tau)\onee(d_{\tau}> a(t)-a(\tau))$ for all $\tau\leq t-1$. Random variables $X_{\tau}$ are independent of each other as both $u$ and the duration of usage are independently sampled at each arrival. Further, the total expectation is upper bounded as follows, % This is a sum of independent indicator random variables with the total mean, 
	\[\mu:=\mathbb{E}\Big[\sum_{\tau=1}^{t-1}X_{\tau}\Big]=\frac{1}{1+\delta_i} \sum_{\tau=1}^{t-1} x_{i\tau} \big(1-F_i( a(t)-a(\tau))\big)\leq\frac{c_i}{1+\delta_i}<c_i-2\sqrt{c_i \log c_i},\]
	where the first inequality follows from the definition of \galg. The second inequality follows from the fact that for $\delta_i= \sqrt{\frac{100\log c_i}{c_i}}$ we have, %definition of $\delta_i$.
%	By definition of \galg, this quantity is upper bounded by 
$\frac{1}{1+\delta_i}c_i< c_i-2\sqrt{c_i \log c_i}$ for $c_i\geq 1$. For $c_i\geq 3$ define non-negative quantity, $\eta=\frac{c_i-1}{\mu}-1$. Applying Chernoff we have for $c_i\geq 3$,
	\[ \mathbb{P}\Big( \sum_{\tau=1}^{t-1}X_{\tau} > (1+\eta)\mu = c_i-1\Big)\leq e^{-\frac{\mu\eta^2}{2+\eta}}<\frac{1}{c_i}. \]
	%Notice that $c_i-1>c_i-\sqrt{c_i \log c_i}$ for $c_i\geq 3$. 
	Therefore, we have that a unit of $i$ is available at $t$ w.p.\ at least $(1-1/c_i)$ for $c_i\geq 3$. 
%	Since $(1+\eta)\mu<c_i-2\sqrt{c_i \log c_i}+\sqrt{2 c_i \log^2 c_i}$
	 %The high probability follows from observing that at every point in time the sum,  $\sum_{t=1}^{\tau}\onee(i,t)\onee(i\to t)\onee(d>\tau-t)$ is over independent bernoulli r.v.s.\ and the mean of the sum is upper bounded by $c_i$. Therefore, Chernoff gives the desired.
\end{proof}
\begin{proof}[Proof of part of Theorem \ref{main}]
	From Lemma \ref{galgvopt} and Lemma \ref{algvgalg} we have that \alg\ is at least,
	\[(1-1/e)e^{-1/c_{\min}}\left(\frac{1-c^{-1}_{\min}}{1+\delta_{\min}}\right)\, \opt,\]
	competitive. For large $c_{\min}$ this converges to $(1-1/e)$ with convergence rate $O\left( \sqrt{\frac{\log c_{\min}}{c_{\min}}}\right)$.
\end{proof}
\section{From Matching to Assortments}\label{sec:asst}
In this more general setting, on arrival of a customer we see a choice model based on the type of the customer. Consequently, our goal is to offer an assortment of items to every arrival. The objective in this case is still to maximize the overall expected revenue, and we compare against optimal clairvoyant algorithms that know the choice models of all arrivals in advance but see the realizations of choice and usage durations in real-time (same as online algorithms). Further, in case of assortments we make the standard assumptions (\cite{negin,RST18,reuse}) that the set $\mathcal{F}$ of feasible assortments is downward-closed and the choice models are such that for any given $i$, the choice probabilities $\phi(i,S)$ of $i$ being chosen given set $S$, are monotonically non-increasing over nested collection of sets (larger sets, lower probability). Further, we also assume access to an oracle that given a set of prices and the choice model, outputs a revenue maximizing assortment (more generally, a constant factor approximation is also acceptable).

Since we now need to think in terms of sets of resources offered to arrivals, a relatively straightforward way to generalize \galg\ will be to fractionally ``match" every arrival to a collection of revenue maximizing assortments/sets, consuming constituent resources in a deterministic fractional fashion. %This will lead to a $(1-1/e)$ result for \galg\ almost directly.
%As in the case of reusability alone, 
In other words, we will convert the stochasticity due to choice into a fluid/deterministic version and develop a generalized version of \galg\ that achieves the desired competitive ratio w.r.t.\ \opt. %In the new \galg\ we fractionally ``match" every arrival with a collection of assortments. 
Specifically, arrival $t$ will be fractionally matched to assortments $A(1,t),\cdots,A(m,t)$ for some $m\geq 0$, with fractions $y(j,t)>0$ such that $\sum_{j=1}^m y(j,t)\leq 1$. The amount of resource $i$ consumed as a result of this is given by $\sum_{A(j,t) \ni i} y(j,t)\, \phi\big(A(j,t),i\big)$. The collection of assortments is found by computing the revenue maximizing assortment with reduced prices computed as in the case of $b-$matching. We assume w.l.o.g.\ that the oracle that outputs revenue maximizing assortments never includes resources with zero probability of being chosen in the assortment. The fractions $y(j,t)$ are chosen to ensure that the inventory constraints are satisfied. % for each unit of every resource. 

\begin{algorithm}[H]
	\SetAlgoNoLine
	%	\KwIn{I}
	%	\KwOut{The frequency number ($FreNum_{\alpha}$) node $\alpha$ gets assigned.}
	\textbf{Inputs:} Set $I$ of resources, capacities $c_i$, usage distributions $F_i$\;
	\textbf{Outputs:} For every arrival $t$, collection of assortments and probabilities $\{A(\eta,t),y(\eta,t)\}_{\eta}$\;
	Let $S = I$, $g(t)=e^{-t}$, and % $z_i(k)=1$ for every $k\in[c_i]\cup\{0\}$ and all $i\in I$\;
	initialize $z_i(k)=1$ for every $i\in I,k\in[c_i]$\;
	\For{every new arrival $t$}{
		For every $i\in I,k\in[c_i]$ and $t\geq 2$, update values 
		\[z_i(k)=z_i(k)+\sum_{\tau=1}^{t-1} \Big(F_i\big(a(t)-a(\tau)\big)-F_i\big(a(t-1)-a(\tau)\big)\Big)y_i(k,\tau)\]\\
		%	Update $z_i(k)$ values deterministically\;
		Let $S_t=\{i \mid (i,t)\in E\}$\;
		Initialize $\eta=0$, $y_i(k,t)=0$ for all $i\in S_t,k\in [c_i]$\; % and %$x_{it}=0$ 
		\While{$\eta<1$}{
			\For{$i\in S_t$}{
				$k^*(i)=\underset{k\in[c_i]}{\arg\max}\, \{k \cdot \onee(z_i(k)>0)\} $\;
				\textbf{if} $k^*(i)=0$ \textbf{ then } $S_t=S_t\backslash\{i\}$\;
			}
			\textbf{if } $S_t=\emptyset$ \textbf{ then } exit while loop\;
			
			 Let $ A(\eta,t)=\underset{S\in S_t }{\arg\max} \, \sum_{i\in S} r_i \left(1-g\left(\frac{k^*(i)}{c_i}\right)\right)$ 
			 \tcp{Optimal\,\, assortment\,\, with\,\, reduced\,\, prices}\
			 
			%	Increase $x_{i^*t}$, $\eta$, and decrease $z_i(k^*_{i^*})$, by $\min\{z_i(k^*_{i^*}),1-\eta\}$\;
			 Let $y(\eta,t)= \min\left\{1-\eta, \min_{i\in A(\eta)} \frac{z_i(k^*(i))}{\phi(A(\eta,t),i)}\right\}$ \text{\tcp{Fluid\,\, update}}
			\ \\
			Increase $\eta$ by $y(\eta,t)$\;
			
			\For{$i\in A(\eta,t)$}{
				Increase $y_i(k^*(i),t)$ and decrease $z_i(k^*(i))$  by $y(\eta,t)\phi(A(\eta,t),i) $\;
			}
		%	Let $y(k^*(i^*),t)=\min\{z_i(k^*(i^*)),1-\eta\}$\;
			%Increase $x_{i^*t}$, $\eta$, and decrease $z_i(k^*(i^*))$, by $y_{i^*}(k^*(i^*),t)$\;
			%	Decrease $z_i(k^*_{i^*})$ by $\min\{z_i(k^*_{i^*}),1-\eta\}$\;
	}}	
	\caption{\astgalg}
	\label{astgalg}
\end{algorithm}
% simply say that if $\gamma<1$ fraction of $t$ is matched to some unit $k$ of $i$ then only a fraction $p_{it}\gamma$ of $k$ is consumed and correspondingly, only a 
	%reward $p_{it}\gamma r_i$ is earned corresponding to this matched fraction. This matched fraction returns deterministically according to the distribution $F_i$ as before. Thus, we convert both elements of post-allocation stochasticity into their fluid counterparts. It follows rather directly from the earlier analysis that \galg\ is asymptotically $(1-1/e)$ competitive even in this case. The next step also follows as is. It is also easy to see that the case of budgeted allocations with small bids also follows in a similar way. Instead of repeating the arguments from earlier, we note that the proof of this statement will follow directly as a special case of the result for assortments that we show next.
\begin{lemma}
	%Given usage durations $F_i$ that do not have any mass at $0$ (this is essentially w.l.o.g.\ as for all practical purposes any two arrivals will be separated by at least some small non-zero interval of time $\epsilon>0$),
	For every instance of the problem i.e., graph $G$, arrival times $A(T)$, and usage distributions $F_i$, we have,
	\[\galg \geq \alpha \opt, \text{ with } \alpha=\Big(1-1/e-O\Big(\frac{1}{c_{\min}}\Big)\Big).\]
\end{lemma}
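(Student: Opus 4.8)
The plan is to follow the proof of Lemma~\ref{galgvopt} almost verbatim, the only genuine change being that \astgalg's choice-aware offering forces the certificate's path-values to be path-dependent. I would keep $\theta_{it},\theta_i$ as in the proof of Lemma~\ref{galgvopt}, now reading $y_i(k,t)$ as the total fraction of unit $k$ of $i$ consumed at $t$ across the offered assortments, i.e.\ $y_i(k,t)=\sum_{\eta\,:\,k^*_\eta(i)=k} y(\eta,t)\,\phi\big(A(\eta,t),i\big)$ with $k^*_\eta(i)$ the value of $k^*(i)$ at while-loop step $\eta$. For the path-values take $\lambda_t(\omega)=\sum_i r_i\,\onee\big[t\in O(\omega,i)\big]\big(1-g(k_i^+(t)/c_i)\big)$, where $k_i^+(t)$ is the highest available unit of $i$ in \astgalg\ at $t^+$ (a deterministic quantity, since \astgalg\ never sees $\omega$); since each arrival lies in $O(\omega,i)$ for at most one $i$, for $t\in O(\omega,i)$ this is simply $\lambda_t(\omega)=r_i\big(1-g(k_i^+(t)/c_i)\big)$.

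For \eqref{cert1}, let $\Lambda_t:=\sum_i r_i\sum_k y_i(k,t)\big(1-g(k/c_i)\big)$ be \astgalg's per-arrival value, so $\sum_t\Lambda_t+\sum_i e^{-1/c_i}\theta_i=\astgalg$ as before. The crux is $\eo[\lambda_t(\omega)]\le\Lambda_t$: dropping the availability indicators and bounding the selection probability by $\phi(i,S^{\mathrm{OPT}}_t)\le 1$, $\eo[\lambda_t(\omega)]$ is at most the reduced-price revenue of $S^{\mathrm{OPT}}_t$ evaluated with \astgalg's $t^+$-prices; since \astgalg\ offers revenue-maximizing feasible assortments, $S^{\mathrm{OPT}}_t\cap S_t\in\mathcal{F}$ is a candidate (downward-closedness), $\phi(i,S^{\mathrm{OPT}}_t\cap S_t)\ge\phi(i,S^{\mathrm{OPT}}_t)$ (monotonicity over nested sets), resources absent from $S_t$ carry reduced price $r_i(1-g(0))=0$, and the reduced prices during the loop are no smaller than at $t^+$ (because $k^*_\eta(i)$ is non-increasing), so $\Lambda_t=\sum_\eta y(\eta,t)\cdot(\text{reduced-price revenue of }A(\eta,t))$ dominates it (the loop runs to $\eta=1$ whenever $S^{\mathrm{OPT}}_t$ is not entirely exhausted in \astgalg). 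Then $\sum_t\eo[\lambda_t(\omega)]+\sum_i\theta_i\le\sum_t\Lambda_t+\sum_i\theta_i=\astgalg+\sum_i(e^{1/c_i}-1)\sum_t\theta_{it}\le e^{1/c_{\min}}\astgalg$, i.e.\ \eqref{cert1} with $\beta=e^{1/c_{\min}}$.

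For \eqref{cert2} with $\alpha_i=1-1/e$, the elementary bound $1-g(k_i^+(t)/c_i)\ge(1-1/e)-\sum_{k\in[c_i]}\Delta g(k)\,\onee(\neg k,t^+)$ used in Lemma~\ref{galgvopt} reduces the task to proving the \emph{same} interim inequality \eqref{interim}, $\eo\big[\sum_{t\in O(\omega,i)}\sum_k\Delta g(k)\,\onee(\neg k,t^+)\big]\le\theta_i/r_i$. On the \astgalg\ side nothing changes: with $p_t(k)$ the consumed-fraction ratio (the choice probabilities already sitting inside $y_i(k,t)$), the fluid $(F_i,\mb{\Sigma},\mb{p}(k))$ process is exactly the evolution of unit $k$, so Lemma~\ref{equiv} gives $r(F_i,\mb{\Sigma},\mb{p}(k))=\sum_t y_i(k,t)$, Lemma~\ref{zeroset} raises the probabilities on $\mb{s}(k)\backslash\bsigma(k)$ to one, and Lemma~\ref{monotone} gives $r(F_i,\mb{s}(k))\le r(F_i,\mb{\Sigma},\mb{p}(k)\vee\mb{1}_{\mb{s}(k)})=r(F_i,\mb{\Sigma},\mb{p}(k))$. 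On the OPT side I would invoke the assortment form of the coupling from Lemma~12 of \cite{rba} (carried out in \cite{reuse}): because OPT ``activates'' $i$ at an offered arrival only with probability $\phi(S^{\mathrm{OPT}}_t,i)\le 1$, this couples $\eo[\sum_{t\in O(\omega,k_O)}\onee(\neg k,t^+)]$ to a $(F_i,\mb{s}(k),\mb{q})$ process with $\mb{q}\le\mb{1}$, hence via Lemma~\ref{monotone} to $r(F_i,\mb{s}(k))$. Summing over OPT-units $k_O$ and over $k$ and unwinding $\theta_i$ exactly as in Lemma~\ref{galgvopt} closes \eqref{interim}, hence \eqref{cert2}, hence the lemma.

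The step I expect to be hardest is the choice-aware coupling underlying \eqref{interim}: one must verify that OPT's freedom to offer large assortments --- which lowers each item's selection probability but also correlates the ``$i$ available in OPT'' events across arrivals in an intricate way --- is faithfully captured by a $(F_i,\mb{s}(k),\mb{q})$ process with sub-unit probabilities, so that Lemma~\ref{monotone} can absorb it; and, dually, the bound $\eo[\lambda_t(\omega)]\le\Lambda_t$ in \eqref{cert1} leans on the oracle comparing against $S^{\mathrm{OPT}}_t\cap S_t$ and on exhausted resources carrying zero reduced price. Both steps use precisely the two standing assumptions --- $\mathcal{F}$ downward closed and choice probabilities monotone over nested sets --- together with Lemma~\ref{monotone}; the revenue-maximizing (or constant-factor) oracle is assumed given and is needed only to make each $A(\eta,t)$ well defined.
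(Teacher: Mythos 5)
Your proposal matches the paper's proof: the paper uses exactly the same certificate variables, namely path-dependent $\lambda_t(\omega)=\sum_i \onee(\omega,i,t)\,r_i\big(1-g(z_i(t^+)/c_i)\big)$ with $z_i(t^+)$ the highest unit of $i$ having nonzero available fraction in \astgalg\ at $t^+$, keeps $\theta_i$ as in Lemma \ref{galgvopt}, and then states that the rest follows essentially verbatim from the $b$-matching case. You have simply spelled out the two steps the paper leaves implicit (the oracle comparison against $S^{\mathrm{OPT}}_t\cap S_t$ for \eqref{cert1} and the sub-unit-probability coupling for \eqref{interim}), and both are carried out correctly.
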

\begin{proof}
	For a primer on analysing online assortments (with non-reusable resources) using the path-based certificate, we refer to Appendix H.1 of \cite{rba}. Let $\onee(\omega,i,t)$ denote successful allocation of a unit of resource $i$ to arrival $t$ in \opt\ on sample path $\omega$. Sample paths are now over the random usage durations as well as random choice of arrivals. Let $z_i(t)$ be the highest index unit of resource $i$ that has a non-zero fraction available at $t^+$. We use the path-based certification with the following variable setting,
	\begin{eqnarray}
	\lambda_t(\omega)&&= \sum_{i\in I} \onee(\omega,i,t) r_i \bigg(1-g\Big(\frac{z_i(t^+)}{c_i}\Big)\bigg),\nonumber%\label{astlambda}
	\end{eqnarray}
$\theta_i$ is defined the same as in Lemma \ref{galgvopt}. The rest of the analysis now follows more or less verbatim from the case of $b-$matching.
\end{proof}
 The main new challenge will be to turn this into \alg. In doing so we must deal with scenarios where \astgalg\ directs some mass towards a set $A$ but only some subset of resources in $A$ are available in \alg. Recall that in case of matching if the randomly chosen resource is unavailable we simply leave $t$ unmatched. We could consider a similar approach here whereby if any unit of sampled set $A$ is unavailable then we do not offer $A$. However, this will not preserve the overall revenue in expectation as the probability of every resource in $A$ being available simultaneously can be small. %, the number of resources (which can be much larger than $c{\min}$). 
 If it were acceptable to offer an assortment with items that are not available in \alg\ then we could also offer the set $A$ as is. The underlying assumption in such a case is that if the arrival chooses an unavailable item then we earn no reward and the arrival simply departs. However, in many applications it may not be practically acceptable to do this as a stock out event after choosing an item is undesirable from the point of view of customer retention. 

Another approach could be to offer the subset $S$ of $A$ that is available in \alg\ at $t$. However, this can affect the probability of resources $i\in S\cap A$ being chosen in non-trivial ways and thus, affect future availability of resources in a way that is challenging to control. In other words, the Chernoff bound approach to arguing expected revenue is large enough may not apply. Consequently, we need to find a way to display some subsets of $A$ such that the overall probability of offering any single resource is not any larger and at the same time, we do not rely on many resources being available simultaneously. The main novelty of our approach to tackle this problem will be to switch our perspective from sets of resources back to individual resources. Specifically, for each resource we find the overall probability that the resource is chosen by a given arrival and then use these probabilities as our guideline i.e., given the subset $S\subseteq A$ of resources that is available, we find a new collection of assortments so that for every available resource, the overall probability of the resource being chosen matches this probability in the original collection of assortments in \astgalg. The main idea here is a probability matching, made non trivial by the fact that we are restricted to choice probabilities given by the choice model. We show that there is an iterative polytime algorithm that can do this probability matching and compute the new collection of assortments so that the probability of each available resource in \alg\ being chosen by $t$ matches that in \astgalg.

\begin{algorithm}[H]
	\SetAlgoNoLine
	%	\KwIn{I}
	%	\KwOut{The frequency number ($FreNum_{\alpha}$) node $\alpha$ gets assigned.}
	\textbf{Inputs:} Set $I$ of resources, capacities $c_i$, values $\delta_i$, and input from \astgalg\;
	Let $S = I$ and % $z_i(k)=1$ for every $k\in[c_i]\cup\{0\}$ and all $i\in I$\;
	initialize inventories $y_i(0)=c_i$\;
	\For{every new arrival $t$}{
		For all $i\in I$, update inventories $y_i(t)$ for returned units\;
		Let $S_t=\{i \mid (i,t)\in E,\, y_i(t)>0 \}$\;
	{\color{black}	Get collection of assortments $\{A(\eta,t),y(\eta,t)\}_{\eta}$ from \astgalg\;
		For each $\eta$, compute $\mathcal{A}(\eta),\mathcal{Y}(\eta)=\text{Probability Match } \Big(A(\eta,t)\cap S_t,\{
		\frac{1}{1+\delta_s}\phi(A,s)\}_{s\in A(\eta,t)\cap S_t }\Big)$\;
		%	Update $z_i(k)$ values deterministically\;
	%	Independently sample $u\in U[0,1]$\;
	%	\textbf{if } $\Big(u\,(1+\delta_i)\in\Big(\sum_{\eta} y(\eta,t) \sum_{j=1}^{m-1} \, y_j(\eta),\sum_{\eta} y(\eta,t) \sum_{j=1}^{m} \, y_j(\eta) \Big]\Big)$ \textbf{ then }
	 Randomly sample collection $\eta$ w.p.\ $y(\eta,t)$. With probability \ $1-\sum_{\eta} y(\eta,t)$, reject arrival $t$\; 
	 From chosen collection $\eta$, randomly sample asssortment $A_j(\eta)$  w.p.\ $y_j(\eta)$, for $j\leq |\mathcal{A}(\eta)|$\;
	 Offer sampled assortment to $t$, otherwise reject $t$ w.p.\ $1-\sum_{j=1}^{|\mathcal{A}(\eta)|}y_j(\eta)$\;
	}}
	\caption{\astalg}
	\label{astalg}
\end{algorithm}

 \begin{algorithm}[H]
	\SetAlgoNoLine
	%	\KwIn{I}
	%	\KwOut{The frequency number ($FreNum_{\alpha}$) node $\alpha$ gets assigned.}
\textbf{Inputs:} set $S$, choice model $\phi(\cdot,\cdot)$, target probabilities $p_s\leq \phi(S,s)$ for $s\in S$\;
\textbf{Output:} Collection $\mathcal{A}=\{A_1,\cdots,A_m\}$ with weights $\mathcal{Y}=\{y_1,\cdots,y_m\}$\;
Let $m=|S|$\;
\For{$j=1$ to $m$}{
Compute values $\gamma_s=\frac{p_s}{\phi(S,s)}$ for all $s\in S$\;
Let $s^*=\arg\min_{s\in S} \gamma_s$\;
Define $A_j=S$ and $y_j=\gamma_{s^*}$\;

For every $s\in S\backslash \{s^*\}$, update $p_s=p_s-y_j\,\phi(S,s)$\;
Update $S=S\backslash \{s^*\}$\;
}

	\caption{Probability Match $(S,\{p_s\}_{s\in S})$}
	\label{pmatch}
\end{algorithm}
% \begin{algorithm}[H]
%	\SetAlgoNoLine
%	%	\KwIn{I}
%	%	\KwOut{The frequency number ($FreNum_{\alpha}$) node $\alpha$ gets assigned.}
%	\textbf{Inputs:} Collectionsubset $S$, choice model $\phi(\cdot,\cdot)$, target probabilities $p_s\leq \sum_{eta} y(\eta,t) \phi(A(\eta,t)\cap S,s)$ for $s\in S$\;
%	Let $m=|S|$\;
%	\For{$j=1$ to $m$}{
%		Compute values $\gamma_s=\frac{p_s}{\sum_{eta} y(\eta,t) \phi(A(\eta,t)\cap S,s)}$ for all $s\in S$\;
%		Let $s^*=\arg\min_{s\in S} \gamma_s$\;
%		Define $A_j=S$ and $y_j=\gamma_{s^*}$\;
%		
%		For every $s\in S\backslash \{s^*\}$, update $p_s=p_s-y_j\,\sum_{\eta} y(\eta,t)\phi(A(\eta,t)\cap S,s)$\;
%		Update $S=S\backslash \{s^*\}$\;
%	}
%	\textbf{Output:} Collection $\mathcal{A}=\{A_1,\cdots,A_m\}$ with weights $\{y_1,\cdots,y_m\}$
%	\caption{Probability Match $(S,\{p_s\}_{s\in S})$}
%	\label{pmatch}
%\end{algorithm}
\begin{lemma}\label{probmatch}
		Given a (substitutable) choice model $\phi: 2^{N},N \to [0,1]$, an assortment $A\subseteq N$ belonging to a downward closed feasible set $\mathcal{F}$, a subset $S\subseteq A$ and target probabilities $p_s$ such that, $\phi(A,s) \geq p_s$ for every resource $s\in S$. There exists a collection $\mathcal{A}=\{A_,\cdots,A_m\}$ of $m=|S|$ assortments along with weights $\big(y_i\big) \in[0,1]^{m}$, such that the following properties are satisfied:
		\begin{enumerate}[(i)]
			\item For every $i\in[m]$, $A_i \subseteq S$ and thus, $A_i\in \mathcal{F}$.
			\item Sum of weights, $\sum_{i\in[m]} y_i \leq 1$.
			\item For every $s\in S$, $\sum_{A_i\ni s} \, y_i\, \phi(A_i,s) = p_s$.
		\end{enumerate}
	Further, there is an iterative $O(m^2)$ algorithm that finds such a collection $\mathcal{A}$ along with weights $(y_i)$.

%	Given a set $N$ of resources, a (substitutable) choice model $\phi: 2^{S},S \to [0,1]$, a collection $\mathcal{A}=\{A_1,\cdots,A_a\}$ of assortments each in some down-ward closed feasible set $\mathcal{F}$, weights $y_j\in[0,1]$ associated with assortments $j\in[a]$ such that, $\sum_{j} y_j\leq 1$. Also given is a subset $S$ of available resources, target probabilities $p_i$ for $i\in S$, and that for every resource $i\in S$, $\sum_{j\mid i\in A_j} y_j \phi(A_j,i) \geq p_i$. Then there exists a collection of $poly(m,n)$ feasible assortments $\mathcal{B}$ with weights $y'_j$, over the restricted set $S$ of resources, such that for every $i\in S$,
%	\[ \sum_{j\mid i\in B_j} y'_j \phi(B_j,i) = p_i.\] 
\end{lemma}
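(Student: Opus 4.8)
Throughout, write $S_0=S$ for the input set, let $S_{j-1}$ be the current set at the start of iteration $j$ (so $|S_{j-1}|=m-j+1$), let $s_j^*$ be the element removed in iteration $j$, let $y_j$ be the weight it defines, and let $p_s^{(j)}$ denote the value of the target $p_s$ after iteration $j$, with $p_s^{(0)}=p_s$. We may assume $\phi(A,s)>0$ for every $s\in S$, since any $s$ with $\phi(A,s)=0$ forces $p_s=0$ and can simply be dropped. The first step is to establish, by induction on $j$, the invariant that at the start of iteration $j$ one has $0\le p_s^{(j-1)}\le \phi(S_{j-1},s)$ for all $s\in S_{j-1}$. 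The base case is the hypothesis $0\le p_s\le \phi(A,s)\le \phi(S,s)$, using monotonicity of choice probabilities over nested sets ($S\subseteq A$). For the step: $\gamma_s=p_s^{(j-1)}/\phi(S_{j-1},s)\in[0,1]$, so $y_j=\min_s\gamma_s\in[0,1]$; nonnegativity of $p_s^{(j)}=\phi(S_{j-1},s)(\gamma_s-y_j)$ follows since $y_j\le\gamma_s$; and $p_s^{(j)}\le p_s^{(j-1)}\le\phi(S_{j-1},s)\le\phi(S_j,s)$, again by monotonicity over nested sets (as $S_j\subseteq S_{j-1}$). This shows the algorithm runs to completion with every $y_j\in[0,1]$ and produces exactly $m=|S|$ sets $A_j=S_{j-1}$; since $A_j\subseteq S\subseteq A$ and $\mathcal{F}$ is downward closed, each $A_j\in\mathcal{F}$, which is property (i).

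For property (iii), note each $s\in S$ is removed exactly once, at some iteration $j(s)$, and $s\in A_\ell=S_{\ell-1}$ iff $\ell\le j(s)$; hence $\sum_{A_\ell\ni s}y_\ell\,\phi(A_\ell,s)=\sum_{\ell\le j(s)}y_\ell\,\phi(S_{\ell-1},s)$. Unwinding the update rule $p_s\leftarrow p_s-y_\ell\,\phi(S_{\ell-1},s)$ over $\ell=1,\dots,j(s)-1$ gives $p_s^{(j(s)-1)}=p_s-\sum_{\ell<j(s)}y_\ell\,\phi(S_{\ell-1},s)$, while at iteration $j(s)$ the definition $y_{j(s)}=\gamma_s$ yields $y_{j(s)}\,\phi(S_{j(s)-1},s)=p_s^{(j(s)-1)}$; adding the two gives $\sum_{\ell\le j(s)}y_\ell\,\phi(S_{\ell-1},s)=p_s$, which is exactly (iii).

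The main obstacle is property (ii). A direct induction on $m$ fails: it only bounds the weights produced after the first removal by $1$, giving $\sum_j y_j\le 1+y_1$. Instead I would prove the strengthened statement that for any valid input $(S_0,\{p_s\})$ to Probability Match (i.e.\ $0\le p_s\le\phi(S_0,s)$) the output obeys $\sum_j y_j\le \max_{s\in S_0}\frac{p_s}{\phi(S_0,s)}$, which is $\le 1$ in our case since $p_s\le\phi(A,s)\le\phi(S,s)$. This is proven by induction on $m$: the case $m=1$ is an equality. For the step, the sub-problem after iteration $1$ is $(S_1,\{p_s^{(1)}\})$ with $0\le p_s^{(1)}\le\phi(S_1,s)$ by the invariant above, so by induction $\sum_{j\ge 2}y_j\le\max_{s\in S_1}\frac{p_s^{(1)}}{\phi(S_1,s)}$. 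Fixing the maximizing $s$ and writing $\rho=\phi(S_0,s)/\phi(S_1,s)\in(0,1]$ (using $S_1\subseteq S_0$) and $\gamma_s=p_s/\phi(S_0,s)$,
\[
y_1+\frac{p_s^{(1)}}{\phi(S_1,s)}=y_1+\frac{p_s-y_1\,\phi(S_0,s)}{\phi(S_1,s)}=(1-\rho)\,y_1+\rho\,\gamma_s\le\max(y_1,\gamma_s)=\gamma_s\le\max_{s'\in S_0}\gamma_{s'},
\]
where we used that the middle quantity is a convex combination and that $y_1=\min_{s'}\gamma_{s'}\le\gamma_s$. This closes the induction and gives (ii). Finally, each of the $m$ iterations recomputes the $\gamma_s$, takes a minimum, and updates the targets in $O(m)$ time, so the algorithm runs in $O(m^2)$.
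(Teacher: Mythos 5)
Your proof is correct, and properties (i) and (iii) are established exactly as in the paper (nested sets $A_j=S^{j-1}$ plus downward closedness, and the telescoping of the updated targets). Where you diverge is on (ii), which you rightly identify as the only nontrivial point. The paper's argument is a one-shot computation on the final weight: writing $q^j_s=\phi(S^j,s)$, it expands $y_m=\bigl(p_{s_m}-\sum_{i<m}y_i q^{i-1}_{s_m}\bigr)/q^{m-1}_{s_m}$ and uses monotonicity of $q^j_{s_m}$ in $j$ (substitutability) together with $y_i\ge 0$ to conclude $y_m\le \gamma^0_{s_m}-\sum_{i<m}y_i$, hence $\sum_i y_i\le\gamma^0_{s_m}\le 1$. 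You instead prove the strengthened invariant $\sum_j y_j\le\max_{s}p_s/\phi(S,s)$ by induction on $|S|$, the key step being that $y_1+p_s^{(1)}/\phi(S^1,s)$ is a convex combination of $y_1$ and $\gamma^0_s$ with $y_1=\min_{s'}\gamma^0_{s'}$. Both proofs hinge on the same substitutability fact ($\phi(\cdot,s)$ increases as the offered set shrinks); the paper's version is shorter, while yours is somewhat more self-contained and explicit — it isolates the invariant $0\le p_s^{(j)}\le\phi(S^j,s)$ that the paper only asserts in passing, and it transparently accommodates iterations with $y_j=0$. Your claim that a "direct" induction fails is a fair motivation but slightly overstated: the paper shows the loss can also be absorbed without strengthening the hypothesis by bounding only the last term. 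Either way, the lemma is fully proved.
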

\begin{proof}
%Suppose for a moment that the choice function $\phi$ was modular in the resources i.e., for any $i\in N$, $\phi(X,i)=p_i$ for every $X$ that contains $i$. Then, it is easy to see 
We give a constructive proof that also outlines the polynomial time algorithm to find the collection $\mathcal{A}$. Let, %$q^0_s$ denote the probabilities $\phi( S,s)$ and let, 
\[q^0_s=\phi( S,s) \text{ and }\gamma^0_s=\frac{p_s}{q^0_s} \text{ for every $s\in S$.}\] 
Observe that $q^0_s\geq \phi(A,s)\geq p_s$, due to substitutability. Thus, $\gamma^0_s\leq 1$ for every $s\in S$.

Let $s_1$ be an element in $S$ with the smallest value $\gamma^0_{s_1}$. % so that the element with index 1 in $S$ has the smallest value $\gamma^0_1$ and the element with index $|S|$ has the largest. %$\gamma_1\leq \cdots\leq \gamma_s$. 
Let $A_1=S$ be the first set added to collection $\mathcal{A}$ with $y_1=\gamma^0_{s_1}$, so that $y_1 \phi(A_1,s_1)=p_{s_1}$. We will ensure that all subsequent sets added to $\mathcal{A}$ do not include the element $s_1$ and this will guarantee condition $(iii)$ for element $s_1$. %Note that $A_1$ also satisfies condition $(i)$. 
Next, define the set $S^1=S\backslash \{s_1\}$. Let,
 \[q^1_s=\phi(S^1,s)\geq q^0_s\text{ and }\gamma^1_{s}=\frac{p_s-y_1q^0_s}{q^1_s} \text{  for every $s\in S^1$.}\]
  Observe that $\gamma^1_s\in[0,1]$ for every $s\in S^1$. Let $s_2$ denote the element with the smallest value $\gamma^1_{s_2}$, out of all elements in $S^1$. % We order elements in $S^1$ in ascending order of values $\gamma^1_s$ and 
If $\gamma^1_{s_2}=0$ we stop, otherwise we now add the second set $A_2=S^1$ to the collection with $y_2=\gamma^1_{s_2}$. Inductively, after $i$ iterations of this process, we have added $i$ nested sets $A_i\subset A_{i-1}\subset \cdots\subset A_1$ to the collection and have the remaining set $S^{i}= A_{i}\backslash \{s_{i}\}$ of $|A|-i$ elements. Define values, %$q^{i}_s=\phi(S^{i},s)$ and %order elements in $S^i$ in ascending order of values 

\[q^{i}_s=\phi(S^{i},s)\text{ and }\gamma^{i}_s=\frac{p_s-\sum_{k=1}^{i}y_kq^{k-1}_{s}}{q^{i}_s} \text{ for every $s\in S^i$.}\]
 Let $s_{i+1}\in S^i$ be the element with the smallest value $\gamma^i_{s_{i+1}}$. If $\gamma^i_{s_{i+1}}>0$, we add the set $A_{i+1}=S^i$ to the collection with $y_{i+1}=\gamma^i_{s_{i+1}}$ and continue.

Clearly, this process terminates in at most $m=|S|$ steps, resulting in a collection of size at most $m$. Each step involves updating the set of remaining elements, computing the new values $\gamma^{(\cdot)}_s$ and finding the minimum of these values. Thus every iteration requires at most $O(m)$ time and the overall algorithm takes at most $O(m^2)$ time. Due to the nested nature of the sets and downward closedness of $\mathcal{F}$, condition $(i)$ is satisfied for every set added to the collection. It is easy to verify that condition $(iii)$ is satisfied for every element by induction. We established the base case for element $s_1$ in the first iteration. Suppose that the property holds for all elements $s_1,s_2,\cdots,s_{i-1}$. Then, by the definition of $y_{i}$ we have for element $s_i$,
\[\sum_{j=1}^{i} y_j \phi(A_j,s_i) =  \big(p_s- \sum_{j=1}^{i-1} y_j q^{j-1}_{s_i} \big)+\sum_{j=1}^{i-1} y_j q^{j-1}_{s_i}=p_{s_i}. \]
Since $s_i$ is excluded from all future sets added to the collection, this completes the induction for $(iii)$. Finally, to prove property $(ii)$ it suffices to show that,
\[y_{m}\leq \gamma^0_{s_{m-1}}-\sum_{i=1}^{m-1} y_i, \]
as this immediately implies, $\sum_{i\in [m]}y_i\leq \gamma^0_{s_{m}}\leq 1$.
 %may equivalently show that, $\sum_{k=1}^i y_k \leq 1 $ for every $i$. 
% We proceed to prove this via induction. The base case of $i=1$ follows by definition. Assume that $y_{i}\leq \gamma^0_{s_i}-\sum_{k=1}^{i-1} y_k$ for some $i\in\{1,\cdots,m-1\}$. We would like to show that,
%\[y_{i+1}\leq \sum_{k=1}^{i+1} y_k.\]
The desired inequality follows by substituting $y_{m}$ and using the following facts: (i) $q^{j}_{s_m}$ is non-decreasing in $j$ due to substitutability, (ii) $y_i\geq 0$ for every $i\in[m]$ as we perform iteration $i$ only if $y_i=\gamma^{i-1}_{s_i}>0$. Therefore,
\[ y_{m}=\frac{p_{s_m}-\sum_{i=1}^{m-1}y_{i}q^{i-1}_{s_m}}{q^{m-1}_{s_m}}\leq \frac{p_{s_{m}}-q^{0}_{s_{m}}\sum_{i=1}^{m-1}y_i}{q^{0}_{s_{m}}} =\gamma^0_{s_{m}}-\sum_{i=1}^{m-1}y_i.\]
%Consider some iteration $i$ and value $y_i=\min_{s\in S^{i-1}} \gamma^{i-1}_s$. %By definition of values $\gamma^{(\cdot)}_s$ and the fact that values $q^k_s$ are non-decreasing in $k$ for any fixed $s$, we have
%Now, define values $y'_i$ as follows
%\[y'_i = \min_{s\in S^{i-1}}  \frac{p_s-\sum_{k=1}^{i}y'_kq^{0}_{s_k}}{q^0_s} \]
\end{proof}

In general, we have to probability match a collection of assortments $\mathcal{C}=\{C_1,\cdots,C_v\}$ with associated probabilities $p_v$, instead of a single assortment. This can be accomplished by performing the above process for each individual assortment $C_j, j\in[v]$ in the collection. Let $\mathcal{A}(j)=\{A_1(j),\cdots,A_{m_j}(j)\}$, denote the resulting collection of subsets of $C_j$ with associated weights $(y_{i}(j))$. The final collection is given by the union $\cup_{j=1}^v \mathcal{A}(j)$ with weights $\cup_{j=1}^v \big\{p_j y_{i}(j)\big\}_{i\in[m_j]}$. It is worth mentioning that one can perform this process somewhat faster by performing the iterative process directly for the menu of assortments $C$ rather than separately for each individual set $C_j$.  The algorithm can also be executed more efficiently for the commonly used MNL choice model. Due to the IIA property of MNL, we obtain that it suffices to sort the resources in order of values $\gamma^0_{s}$ in the beginning and this ordering does not change as we remove more and more elements. Each iteration only takes $O(1)$ time and so the process has runtime dominated by sorting a set of size $m=|S|$ i.e., $O(m\log m)$.

\begin{proof}[Proof of Theorem \ref{main}]
	To complete the proof of the main theorem we need to compare \astalg\ with \astgalg. Given the probability matching due to Lemma \ref{probmatch}, we have that for every arrival $t$, if a unit of resource $i$ is available in \astalg\ then it is offered to and chosen by arrival $t$ w.p.\ $\frac{1}{1+\delta_i}\sum_{k \in [c_i]}y_i(k,t)$. Observe that the total fraction of resource $i$ fluidly matched and chosen by arrival $t$ in \galg\ is exactly $\sum_{k \in [c_i]}y_i(k,t)$. Now the concentration argument used in proving Lemma \ref{algvgalg} gives us that $i$ is available at $t$ w.p.\ at least $1-1/c_i$, completing the proof. %as long as we can prove that resource $i$ is available at $t$ w.h.p., we are done. This follows verbatim from the argument used  which uses the fact that the expected 
\end{proof}

\bibliographystyle{alpha}
\bibliography{./../bib}
\end{document}